\def\bbl@set@language#1{%
  \edef\languagename{%
    \ifnum\escapechar=\expandafter`\string#1\@empty
    \else\string#1\@empty\fi}%
  \@ifundefined{babel@language@alias@\languagename}{}{%
    \edef\languagename{\@nameuse{babel@language@alias@\languagename}}%
  }%
  \select@language{\languagename}%
  \expandafter\ifx\csname date\languagename\endcsname\relax\else
    \if@filesw
      \protected@write\@auxout{}{\string\select@language{\languagename}}%
      \bbl@for\bbl@tempa\BabelContentsFiles{%
        \addtocontents{\bbl@tempa}{\xstring\select@language{\languagename}}}%
      \bbl@usehooks{write}{}%
    \fi
  \fi}
\newcommand{\DeclareLanguageAlias}[2]{%
  \global\@namedef{babel@language@alias@#1}{#2}%
}
\newtheorem{theorem}{Theorem}
\newtheorem{definition}{Definition}
\newtheorem{lemma}{Lemma}
\newenvironment{proof*}[1][\proofname]{%
  
  \begin{proof}[#1]}{\end{proof}}
\newcommand{\half}{\mbox{$\textstyle \frac{1}{2}$}}
\newcommand{\identity}{\mathbb{1}}
\renewcommand{\epsilon}{\varepsilon}
\pgfmathsetmacro\MathAxis{height("$\vcenter{}$")/2}
\begin{document}

\title{Encoded State Transfer: Beyond the Uniform Chain}
\date{\today}
\author{Alastair \surname{Kay}}
\affiliation{Royal Holloway University of London, Egham, Surrey, TW20 0EX, UK}
\email{alastair.kay@rhul.ac.uk}
\begin{abstract}
In a recent work \cite{kay2022}, we showed that a uniformly coupled chain could be symmetrically extended by engineered spin chains in such a way that we could choose part of the spectrum of the overall system. When combined with an encoding that avoids the uncontrolled eigenvalues, this resulted in the possibility of achieving a range of tasks such as perfect quantum state transfer. In this paper, we apply the same strategy to a much broader range of initial systems --- arbitrary chains, and even coupled networks of spins, while providing guarantees on the existence of solutions.
\end{abstract}
\maketitle

\section{Introduction}

The study of quantum state transfer, i.e.\ how to use the inherently local interactions of a device to communicate or create entanglement between distant qubits, is an important part of the puzzle of how to design modestly sized quantum computers. However, until our recent revelations in \cite{kay2022}, almost all studies were confined to a niche in which the design of the initial system had to be defined in advance, and stringent manufacturing requirement imposed \cite{christandl2004,karbach2005,kay2010a,vinet2012a}. Alternatives, such as \cite{burgarth2005a,wojcik2005,wojcik2007} circumvented some of these at the cost of imperfect transfer or indefinite arrival time. Now, these studies of perfect transfer are coming of age \cite{kay2022}. We see the potential to achieve perfect transfer by adapting an already manufactured system that we are given. It does not matter how the system is coupled; we can adapt. Nevertheless, \cite{kay2022} left open some crucial questions: does the construction apply beyond the uniformly coupled chain? Mathematically, solution of a linear problem was involved. How can we guarantee that such a problem has a solution? The purpose of this paper is to resolve these questions.

\begin{figure}
\centering
\includegraphics[width=0.45\textwidth]{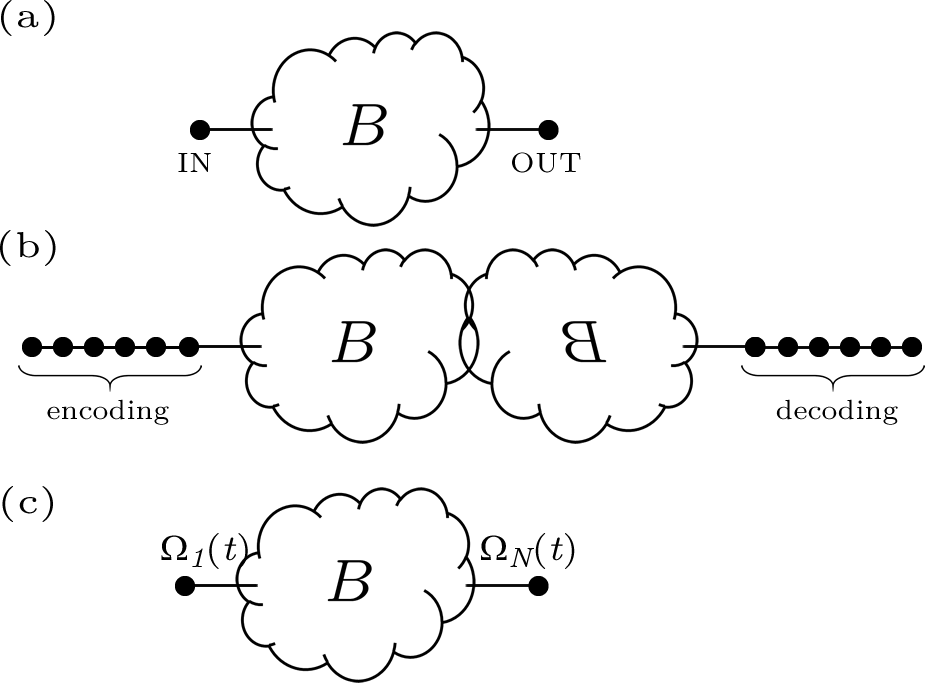}
\caption{(a) Given a fixed Hamiltonian, $B$, we aim to transfer states between two specific vertices. (b) $B$ is symmetrised about the output vertex, then the input vertex and its mirror are extended with chains. An encoding/decoding is used across the added regions. (c) The perfect transfer protocol is replicated with time control of the extremal couplings \cite{haselgrove2005}.}\label{fig:schematic}
\end{figure}

We will show how any given fixed system can be extended by engineered chains of spins (Fig.\ \ref{fig:schematic}) in order to facilitate tasks such as perfect state transfer and high quality state creation, a significant paradigm shift. To facilitate this, in Sec.\ \ref{sec:basics} we review some basic properties of spin systems and how the can be reconstructed from spectral information, followed by those of state transfer in Sec.\ \ref{sec:pst}. In Sec.\ \ref{sec:technical}, we perform most of the technical analysis, showing how any spin system can be extended so that part of the spectrum of the overall system is fixed to our demands, and how the other eigenvalues can be avoided via encoding. These results have certain conditions on success. In the following section, we analyse how our choice of (partial) target spectrum can ensure that those conditions are fulfilled. Finally, in Sec. \ref{sec:examples}, we demonstrate the application of the technique to a range of examples. These also provide the opportunity to numerically demonstrate the trade-offs if one does not want analytically perfect transfer and is willing to tolerate some error. Although we won't explicitly consider it, a recent extension of encoding strategy \cite{keele2021a} optimises noise tolerance, and is applicable in the present context.

\section{Mathematical Structure}\label{sec:basics}

Imagine we are given some fixed square matrix $B$ of dimension $N_B$. It will come with two clearly identified indices corresponding to `IN' and `OUT'. Without loss of generality, we number these as 1 and $N_B$ respectively. The matrix $B$ acts on a Hilbert space $\mathcal{B}$.

On any space $\mathcal{H}$, there will be a symmetry operator $S_{\mathcal{H}}$. In general, we need not precisely specify the symmetry operator, merely that $S_{\mathcal{H}}^2=\identity$ and that $S_{\mathcal{H}}$ acting on any index returns a different one. We say that $B$ is symmetric if $B=S_{\mathcal{B}}BS_{\mathcal{B}}$. In this case, $B$ decomposes into symmetric and antisymmetric components $B=B_+\oplus B_-$.

When we consider a \emph{tridiagonal} matrix $A$ in space $\mathcal{A}$, we shall define
$$
\mathcal{S}_A=\sum_{n=1}^{N_A}\ket{n}\bra{N+1-n}.
$$
If $A$ is symmetric, each of the diagonals is mirror-symmetric (also known as centrosymmetric).

Our aim will be to construct a matrix $H$ with a specific structure:
$$
H=\left(\begin{array}{ccc|ccc|ccc|ccc}
& & & & & & & & & & &\\
&S_\mathcal{A}AS_\mathcal{A}& & & & & & & & & &\\
& & &J& & & & & & & &\\\hline
& &J& & & & & & & & &\\\
& & & &B & & & & & & & \\
& & & & & &J' & & & & & \\\hline
& & & & & J' & & & & & & \\
& & & & & & & S_\mathcal{B}BS_\mathcal{B} & & & & \\
& & & & & & & & & J & & \\\hline
& & & & & & & & J & & & \\
& & & & & & & & & & A & \\
& & & & & & & & & & &
\end{array}\right).
$$
where $A$ is a tridiagonal matrix that we will choose in order to tune the spectrum of $H$ in a particular way. By design, $H$ is symmetric, yielding
$$
H_{\pm}=\left(\begin{array}{ccc|ccc}
& & & & & \\
&S_\mathcal{A}AS_\mathcal{A}& & & & \\
& & &J& & \\\hline
& &J& & & \\\
& & & &B\pm J'\proj{N_B} & \\
& & & & &
\end{array}\right).
$$
We shall typically denote $C_{\pm}=B\pm J'\proj{N_B}$.

Much of our study will be heavily dependent upon the characteristic polynomial $Q_H(z)$ of a given matrix $H$, which is monic. $P_H(z)$ is the characteristic polynomial of the principal submatrix of $H$, i.e.\ $H$ with its first row and column removed\footnote{The symmetry operator in the structure of $H_+$ allows us to refer to $P_A(z)$ and $P_B(z)$.}.
The spectrum of $H$ is denoted by $\text{spec}(H)$, obviously coinciding with the roots of $Q_H(z)$.

Crucial in what is to come is the relationship
\begin{equation}\label{eqn:overall characteristic}
\text{det}(z\identity-H_{\pm})=Q_{H_\pm}(z)=Q_A(z)Q_{C_\pm}(z)-J^2P_A(z)P_{C_{\pm}}(z).
\end{equation}

By Cauchy's interlacing theorem, the $N_B$ roots of $Q_{C_\pm}$ interlace with the $N_B-1$ roots of $P_{C_\pm}$. For tridiagonal matrices, the roots of $Q_{C_\pm}$ and $P_{C_\pm}$ \emph{strictly} interlace (i.e.\ there are no equalities) \cite{gladwell2005}.

\begin{lemma}\label{lem:evector inverse}
Consider a matrix of the form $H_{\pm}$, which has an eigenvalue $\lambda$. The eigenvector has the form $\left(\begin{array}{c} \ket{a} \\ \ket{b} \end{array}\right)$. Either $\lambda$ is an eigenvalue of $C_{\pm}$ and $\braket{N_A}{a}\braket{1}{b}=0$ such that $\ket{b}$ is the eigenvector of $C_{\pm}$, or
$$
\ket{b}\propto(C_{\pm}-\lambda\identity)^{-1}\ket{1}.
$$
\end{lemma}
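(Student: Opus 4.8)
The plan is to extract everything from the two block rows of the eigenvalue equation $H_\pm\binom{\ket a}{\ket b}=\lambda\binom{\ket a}{\ket b}$. Because the only link between the two diagonal blocks is the single element $J$ joining the end site $N_A$ of the tridiagonal block $S_\mathcal{A}AS_\mathcal{A}$ to site $1$ of $C_\pm$, these rows read
\begin{align}
(S_\mathcal{A}AS_\mathcal{A}-\lambda\identity)\ket a&=-J\braket{1}{b}\ket{N_A},\label{eqn:arow}\\
(C_\pm-\lambda\identity)\ket b&=-J\braket{N_A}{a}\ket 1.\label{eqn:brow}
\end{align}
I would then dichotomise on whether $\lambda\in\text{spec}(C_\pm)$, reading $\ket b$ off \eqref{eqn:brow} in each case and using \eqref{eqn:arow} only to supply a non-degeneracy.

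If $\lambda\notin\text{spec}(C_\pm)$, then $C_\pm-\lambda\identity$ is invertible and \eqref{eqn:brow} inverts at once to $\ket b=-J\braket{N_A}{a}(C_\pm-\lambda\identity)^{-1}\ket1$, which is the stated proportionality. To see the constant is nonzero I would exclude $\braket{N_A}{a}=0$: were it to vanish, \eqref{eqn:brow} would force $\ket b=0$, whereupon \eqref{eqn:arow} would make $\ket a$ a nonzero eigenvector of the tridiagonal block $S_\mathcal{A}AS_\mathcal{A}$ whose component at the end site $N_A$ is zero, which is impossible for an irreducible tridiagonal matrix. Hence $\ket b\propto(C_\pm-\lambda\identity)^{-1}\ket1$ with nonzero constant.

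The delicate case is $\lambda\in\text{spec}(C_\pm)$, and I expect it to be the main obstacle. Here I would impose the Fredholm solvability condition on \eqref{eqn:brow}: contracting it with the eigenvector $\bra w$ of $C_\pm$ at $\lambda$ (a left eigenvector, since $C_\pm$ is symmetric) annihilates the left-hand side and leaves $J\braket{N_A}{a}\braket{1}{w}=0$, so (with $J\neq0$) $\braket{N_A}{a}\braket{1}{w}=0$. The crux is to rule out $\braket{1}{w}=0$, for then $\braket{N_A}{a}=0$ follows. This is exactly where I would invoke the strict interlacing of the roots of $Q_{C_\pm}$ and $P_{C_\pm}$ noted above: $\braket{1}{w}=0$ is equivalent to $\lambda$ being a root of $P_{C_\pm}$ — the characteristic polynomial of $C_\pm$ with its first row and column removed — and, as $\lambda$ is already a root of $Q_{C_\pm}$, strict interlacing forbids this coincidence. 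With $\braket{N_A}{a}=0$, \eqref{eqn:brow} is homogeneous, so $\ket b$ lies in the kernel of $C_\pm-\lambda\identity$, i.e.\ $\ket b$ is the eigenvector of $C_\pm$, and $\braket{N_A}{a}\braket{1}{b}=0$ holds automatically.

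The hard part is thus confined to this coincidence case — excluding a $C_\pm$-eigenvector that vanishes at the coupling site $1$, and, should $\lambda$ be a degenerate eigenvalue of $C_\pm$, checking that the same solvability argument still forces $\braket{N_A}{a}=0$ across the whole kernel. The identification $\braket{1}{w}=0\Leftrightarrow P_{C_\pm}(\lambda)=0$ together with strict interlacing is the clean way to close it; absent any tridiagonality of $C_\pm$ itself, the same step can instead be taken as the hypothesis that the eigenvector of $C_\pm$ does not vanish at site $1$.
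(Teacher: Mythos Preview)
Your approach is essentially the paper's: write the $C_\pm$ block row of the eigenvalue equation, invert when $\lambda\notin\text{spec}(C_\pm)$, and otherwise contract with the $C_\pm$-eigenvector $\ket{w}$ to obtain $\braket{N_A}{a}\braket{1}{w}=0$. The divergence is in how you handle the sub-case $\braket{1}{w}=0$.

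You try to \emph{exclude} $\braket{1}{w}=0$ via strict interlacing of $Q_{C_\pm}$ and $P_{C_\pm}$, but this is both unavailable and unnecessary here. Unavailable, because $C_\pm$ is not assumed tridiagonal; the paper only establishes strict interlacing for general $C_\pm$ in the \emph{next} lemma, and only under the ``fully supported'' hypothesis --- which is precisely the assumption that no eigenvector of $C_\pm$ vanishes at site~$1$. Invoking it now is circular. Unnecessary, because the lemma's conclusion in the first alternative is that $\braket{N_A}{a}\braket{1}{b}=0$ (a \emph{product} vanishes) and $\ket{b}$ is an eigenvector of $C_\pm$. When $\braket{1}{w}=0$, the paper simply observes that $\bigl(\ket{a},\ket{b}\bigr)=\bigl(0,\ket{w}\bigr)$ is itself an eigenvector of $H_\pm$, for which $\braket{1}{b}=\braket{1}{w}=0$ makes the product vanish and $\ket{b}=\ket{w}$ is manifestly the $C_\pm$-eigenvector. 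No exclusion is needed; the unsupported case is absorbed into the first alternative rather than ruled out. Your extra argument that $\braket{N_A}{a}\neq0$ in the invertible case (via irreducibility of the tridiagonal block) is a nice addition the paper omits.
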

\begin{proof}
We expand the eigenvector conditions to find
$$
C_{\pm}\ket{b}+J\proj{1}S_{\mathcal{A}}\ket{a}=\lambda\ket{b}.
$$
This rearranges to $(\lambda\identity-C_{\pm})\ket{b}=J\ket{1}\braket{N_A}{a}$. In the case that $\lambda$ is not in the spectrum of $C_{\pm}$, $\lambda\identity-C_{\pm}$ is invertible, and we can incorporate constants into normalisation to give $\ket{b}\propto(C_{\pm}-\lambda\identity)^{-1}\ket{1}$.

Otherwise, $C_{\pm}$ has an eigenvector $\ket{\lambda}$. Taking the inner product with this gives
$$
\bra{\lambda}(\lambda\identity-C_{\pm})\ket{b}=0=J\braket{\lambda}1\braket{N_A}{a}.
$$
If $\braket{\lambda}1=0$, then it is clear that $\ket{b}=\ket{\lambda}$ and $\ket{a}=0$ is a solution. Otherwise, if $\braket{N_A}{a}=0$, then $C_{\pm}\ket{b}=\lambda\ket{b}$, and hence $\ket{b}=\ket{\lambda}$. 

\end{proof}
The case of $\braket{\lambda}{1}=0$ is a special case that we refer to as an `unsupported state' \cite{pemberton-ross2010}. We will generally want to neglect these cases, and hence we refer to a ``fully supported $B$'' meaning the original $B$, but reduced in size by removing any such eigenvectors.

\begin{lemma}\label{lem:StrictInterlace}
For a fully supported matrix $C_{\pm}$, strict interlacing of the eigenvalues with the eigenvalues of the principal submatrix $C'$ always occurs.
\end{lemma}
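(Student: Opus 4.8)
The plan is to reduce the claim to a statement about common roots. Cauchy's interlacing theorem, quoted above, already guarantees that the $N_B-1$ roots of $P_{C_\pm}$ lie weakly between consecutive roots of the $N_B$-degree $Q_{C_\pm}$, so ``strict'' interlacing (no equalities) is equivalent to showing that \emph{no} root of $Q_{C_\pm}$ is simultaneously a root of $P_{C_\pm}$. The central tool is the standard resolvent identity
\begin{equation*}
\bra{1}(z\identity-C_\pm)^{-1}\ket{1}=\frac{P_{C_\pm}(z)}{Q_{C_\pm}(z)},
\end{equation*}
which follows from Cramer's rule: the $(1,1)$ cofactor of $z\identity-C_\pm$ is exactly $\det(z\identity-C')$, and deleting the first row and column of $C_\pm$ produces the principal submatrix $C'$.

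Next I would expand the left-hand side spectrally. Since $C_\pm$ is Hermitian it diagonalises as $C_\pm=\sum_\lambda\lambda\proj{\lambda}$, giving the partial-fraction decomposition
\begin{equation*}
\frac{P_{C_\pm}(z)}{Q_{C_\pm}(z)}=\sum_{\lambda}\frac{|\braket{\lambda}{1}|^2}{z-\lambda},
\end{equation*}
in which the residue at each eigenvalue $\lambda$ is the overlap weight $|\braket{\lambda}{1}|^2$. Here the full-support hypothesis enters twice. First, it forces every eigenvalue to be simple: if some $\lambda$ had multiplicity at least two, the linear functional $\ket{v}\mapsto\braket{1}{v}$ would have a nontrivial kernel on the eigenspace, producing an eigenvector orthogonal to $\ket{1}$, i.e.\ precisely the unsupported state that the reduction to a fully supported $C_\pm$ has removed. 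Second, full support means $\braket{\lambda}{1}\neq0$ for every remaining eigenvector, so every residue is strictly positive.

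Finally I would conclude by contradiction. Let $\lambda_0$ be any root of $Q_{C_\pm}$, hence a simple pole of the rational function above. Were $\lambda_0$ also a root of $P_{C_\pm}$, the factor $(z-\lambda_0)$ would cancel and the residue there would vanish; but that residue equals $|\braket{\lambda_0}{1}|^2>0$, a contradiction. Thus $P_{C_\pm}$ and $Q_{C_\pm}$ have no common root, and together with Cauchy interlacing this yields strict interlacing. I expect the main obstacle to be the full-support bookkeeping of the previous paragraph --- cleanly arguing that removing unsupported eigenvectors leaves a matrix whose every eigenvalue is both simple and genuinely coupled to $\ket{1}$ --- rather than the resolvent manipulation, which is routine. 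A secondary point to handle with care is the correspondence between a vanishing residue and a shared root, which relies on $\lambda_0$ being a simple root of $Q_{C_\pm}$ so that $Q_{C_\pm}'(\lambda_0)\neq0$.
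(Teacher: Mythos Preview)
Your proof is correct and takes a genuinely different route from the paper. The paper argues directly with the block structure: writing
\[
C_\pm=\left(\begin{array}{cc} B_0 & \bra{J} \\ \ket{J} & C' \end{array}\right),
\]
it supposes $C_\pm$ and $C'$ share an eigenvalue $\lambda$, reads off from the eigenvector condition that $-\alpha\ket{J}=(C'-\lambda\identity)\ket{c}$, takes the inner product with the $C'$-eigenvector $\ket{\lambda}$ to obtain $\braket{\lambda}{J}=0$ (using $\alpha\neq0$, itself a consequence of full support), and observes that this forces $\left(\begin{array}{c}0\\\ket{\lambda}\end{array}\right)$ to be an unsupported eigenvector of $C_\pm$, contradicting the hypothesis. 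Your argument instead passes through the resolvent identity and the partial-fraction decomposition $\mu_{C_\pm}(z)=\sum_\lambda|\braket{\lambda}{1}|^2/(z-\lambda)$, ruling out common roots by positivity of the residues. The paper's proof is shorter and needs nothing beyond the eigenvector equation; your approach has the virtue of connecting the lemma directly to the function $\mu_{C_\pm}$ that governs the rest of the paper's analysis (Sec.~\ref{sec:key observations} and Fig.~\ref{fig:tape}), and the Herglotz/positive-residue picture makes strict interlacing essentially automatic once full support is in place. Your bookkeeping on simplicity of the eigenvalues is also sound: a degenerate eigenspace would necessarily contain a vector orthogonal to $\ket{1}$, which the reduction to full support has already excised.
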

\begin{proof}
Consider a fully supported matrix $C_{\pm}$ with the structure
$$
C_{\pm}=\left(\begin{array}{cc}
B_0 & \bra{J} \\
\ket{J} & C'
\end{array}\right),
$$
Let's assume that $C_{\pm}$ has an eigenvector of the form $\left(\begin{array}{c} \alpha \\ \ket{c} \end{array}\right)$ of eigenvalue $\lambda$ while $C'$ also has an eigenvalue $\lambda$ with eigenvector $\ket{\lambda}$. From the eigenvector condition on $C_{\pm}$, we have that
$$
-\alpha\ket{J}=(C'-\lambda\identity)\ket{c}.
$$
Taking the inner product with $\bra{\lambda}$ shows that $\braket{\lambda}{J}=0$ since $\alpha\neq 0$. However, in this case, there is a solution $\left(\begin{array}{c} 0 \\ \ket{\lambda} \end{array}\right)$, which we have explicitly excluded. As such, $C_{\pm}$ and $C'$ never have any eigenvalues in common. Thus, interlacing of eigenvalues becomes strict interlacing.
\end{proof}

\subsection{Sign Changes}

In \cite{gesztesy1997}, the following quantity was introduced:
$$\mu_A(z)=\bra{1}(z\identity-A)^{-1}\ket{1}=\frac{P_A(z)}{Q_A(z)}.$$
The final equality is readily verified by explicitly calculating the inverse -- the cofactor of the $(1,1)$ element of $z\identity-H$ is just $\text{det}(z\identity-H')/\text{det}(z\identity-H)$.

Let the set $\Sigma_{A}$ be the roots of $Q_{A}(z)$ and $P_{A}(z)$ in decreasing order. We note that as one varies $z$, $\mu_A(z)$ changes sign at the positions $\Sigma_A$. For a fully supported $A$, the positions of these sign changes are always distinct and hence the sign always alternates. Moreover, at large $z$, since $P_A$ and $Q_A$ are both monic, $\mu_A(z)$ must be positive. We find this useful to visualise, as in Fig.\ \ref{fig:tape}. We will later use these sign changes to help us confine roots of a rational equation to being within certain bounds.
\begin{figure}
\centering
\begin{tikzpicture}
\node [anchor=center] at (-0.7,0.5) {$\mu_A(z):$};
\draw [fill=blue!40,thick] (0,0) -- (7,0) to [bend left] (7,0.5) to [bend right] (7,1) -- (0,1) to [bend left] (0,0.5) to [bend right] (0,0);
\draw [draw=none,fill=red!40] (1,0) rectangle (1.5,1);
\draw [draw=none,fill=red!40] (3,0) rectangle (3.8,1);
\draw [draw=none,fill=red!40] (5.5,0) rectangle (6.8,1);
\draw [dashed,thick] (1,1.2) node [anchor=south] {$z_1$} -- (1,-0.2);
\draw [dashed,thick] (1.5,1.2) node [anchor=south] {$z_2$} -- (1.5,-0.2);
\draw [dashed,thick] (3,1.2) node [anchor=south] {$z_3$} -- (3,-0.2);
\draw [dashed,thick] (3.8,1.2) node [anchor=south] {$z_4$} -- (3.8,-0.2);
\draw [dashed,thick] (5.5,1.2) node [anchor=south] {$z_5$} -- (5.5,-0.2);
\draw [dashed,thick] (6.8,1.2) node [anchor=south] {$z_6$} -- (6.8,-0.2);
\node at (0.5,0.5) {+};
\node at (1.25,0.5) {-};
\node at (2.25,0.5) {+};
\node at (3.4,0.5) {-};
\node at (4.65,0.5) {+};
\node at (6.15,0.5) {-};
\draw [fill=none,thick] (0,0) -- (7,0) to [bend left] (7,0.5) to [bend right] (7,1) -- (0,1) to [bend left] (0,0.5) to [bend right] (0,0);
\draw [-latex,thick] (3,-0.5) -- (2,-0.5) node [anchor=east] {$z$};
\end{tikzpicture}
\caption{The value of the function $\mu_A(z)$ alternates sign at the distinct positions $z_i$ which satisfy $Q(z_{2i-1})=0$ and $P(z_{2i})=0$. The value at large $z$ is always positive.}\label{fig:tape}
\end{figure}
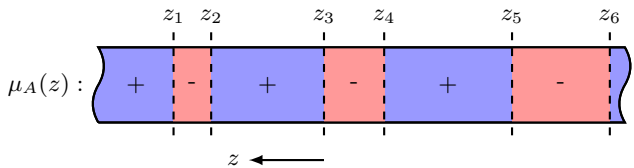

\subsection{Tridiagonal Matrices}\label{sec:key observations}

If an $N_A\times N_A$ Hermitian matrix $A$ is tridiagonal, then it has further useful properties. In particular, given $P_A$ and $Q_A$, we will be able to reconstruct it.
\begin{lemma}\label{thm:reconstruct}
$A$ is uniquely determined (under the assumption that all coupling strengths are positive) by $Q_A(z)$ and $P_A(z)$ if and only if $Q_A(z)$ and $P_A(z)$:
\begin{itemize}
\item have all real roots
\item have degrees $N_A$ and $N_A-1$ respectively
\item have strictly interlacing roots.
\end{itemize}
\end{lemma}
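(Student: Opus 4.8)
The plan is to prove the biconditional by treating the two directions separately, with essentially all of the effort going into sufficiency. For \emph{necessity}, suppose such a tridiagonal $A$ with positive off-diagonal couplings exists. Then $A$ and its principal submatrix $A'$ are real symmetric, so their spectra---the roots of $Q_A$ and $P_A$---are real; the degrees $N_A$ and $N_A-1$ are forced by the sizes of $A$ and $A'$; and strict interlacing is precisely the statement quoted from \cite{gladwell2005} (equivalently Lemma~\ref{lem:StrictInterlace} applied with $C_\pm=A$). Thus the three listed conditions are all necessary, and no real work is required here.

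For \emph{sufficiency} I would reconstruct $A$ explicitly from the data, exploiting the three-term recurrence obeyed by characteristic polynomials of tridiagonal matrices. Writing $a_1$ for the top-left diagonal entry, $b_1$ for the first coupling, and $A''$ for $A$ with its first two rows and columns deleted, expansion of $\det(z\identity-A)$ along the first row gives
$$
Q_A(z)=(z-a_1)P_A(z)-b_1^2\,P_{A''}(z).
$$
Dividing through by $P_A$ recasts this as a continued-fraction (J-fraction) step for the function $\mu_A(z)=P_A(z)/Q_A(z)$ introduced above,
$$
\mu_A(z)=\cfrac{1}{\,z-a_1-b_1^2\,\mu_{A'}(z)\,},
$$
where $\mu_{A'}(z)=P_{A''}(z)/P_A(z)$ is an object of exactly the same type one dimension lower. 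Reconstruction is therefore an iterative peeling: comparing the large-$z$ expansions of both sides (both $P$ and $Q$ are monic, so $\mu_A(z)=1/z+O(1/z^2)$) reads off $a_1$ from the $O(1)$ term and then $b_1^2$ from the $O(1/z)$ term of $z-a_1-1/\mu_A(z)$, after which the residual data $(P_A,P_{A''})$ feeds the next step. The process terminates after $N_A$ stages, producing diagonal entries $a_1,\dots,a_{N_A}$ and squared couplings $b_1^2,\dots,b_{N_A-1}^2$.

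The crux is to show that every extracted $b_i^2$ is \emph{strictly positive}, which is exactly where strict interlacing is used. This is equivalent to positivity of the residues in the partial-fraction expansion $\mu_A(z)=\sum_i P_A(\lambda_i)/\bigl(Q_A'(\lambda_i)(z-\lambda_i)\bigr)$ over the (simple) roots $\lambda_i$ of $Q_A$: the alternation picture of Fig.~\ref{fig:tape} shows that, under strict interlacing, $P_A(\lambda_i)$ and $Q_A'(\lambda_i)$ carry the same sign at every root, so each residue is positive. Positive $b_i^2$ then admits the positive square root $b_i>0$, and because the sign of each coupling is pinned by the positivity hypothesis, the reconstructed $A$ is unique. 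I expect the main obstacle to be the \emph{inductive step}: after removing $a_1$ and $b_1^2$, one must verify that the reduced pair of polynomials again has real, strictly interlacing roots of the correct degrees, so that the next residue is again positive and the recursion does not stall. I would discharge this either by tracking the sign-alternation of $\mu_A$ through the Euclidean step, or---more cleanly---by reading $\mu_A$ as the Stieltjes transform of the positive discrete measure $\sum_i r_i\,\delta_{\lambda_i}$ with weights $r_i=P_A(\lambda_i)/Q_A'(\lambda_i)>0$ and invoking Favard's theorem: the orthonormal polynomials for this measure satisfy a three-term recurrence with real coefficients and positive subdiagonal, whose matrix is precisely $A$, and uniqueness of that orthonormal sequence gives uniqueness of $A$.
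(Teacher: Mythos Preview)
Your proposal is sound: the necessity direction is immediate, and for sufficiency the continued-fraction/Lanczos peeling together with the positive-residue observation (or, more elegantly, the Stieltjes-transform plus Favard route) is exactly the standard machinery for this inverse Jacobi problem. You have correctly identified the inductive preservation of strict interlacing as the only nontrivial step, and the Favard argument disposes of it cleanly since the residual measure after one J-fraction step is again a positive discrete measure on fewer points.

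By contrast, the paper does not prove this lemma at all: its entire proof is the single line ``See \cite{gladwell2005}.'' So your write-up is not a different route so much as an explicit unpacking of what the cited reference contains. If you want to match the paper you can simply cite Gladwell; if you want a self-contained argument, what you have sketched is essentially the textbook proof and would be acceptable, though you should state explicitly that the first extracted $b_1^2$ equals the variance of the spectral measure (hence is strictly positive when $N_A>1$), as that makes the base of the induction transparent.
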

\begin{proof}
See \cite{gladwell2005}.
\end{proof}
Our challenge will be to specify the functions $P_A$ and $Q_A$. Ultimately, we will achieve this by fixing some of the eigenvalues of $H$. For now, we make two crucial observations to which we will return later. Firstly, if $H_+$ is to have an eigenvalue $\lambda$, then by Eq.\ (\ref{eqn:overall characteristic})
$$
J^2\mu_A(\lambda)\mu_{C_+}(\lambda)=1.
$$
Given $C_+$, we know $J^2\mu_A(\lambda)$. Secondly, consider a plot of $\mu_{C_+}(z)$ such as in Fig.\ \ref{fig:tape}. If we select two target eigenvalues $\lambda_1$ and $\lambda_2$ such that $\mu_{C_+}(\lambda_1)\mu_{C_+}(\lambda_2)<0$, then $\mu_A(z)$ must have a sign change in the region $(\lambda_1,\lambda_2)$, resulting from a root of either $P_A(z)$ or $Q_A(z)$, i.e.\ at least one of these has a root in the range $(\lambda_1,\lambda_2)$.

\section{Quantum State Transfer}\label{sec:pst}

We will correspond $H$ in the above definition with the single excitation subspace of a Hamiltonian of the form
\begin{equation}\label{eqn:exchange}
\frac12\sum_{i,j}J_{ij}(X_iX_j+Y_iY_j)+\frac12\sum_iB_iZ_i,
\end{equation}
i.e.\ the basis states are
$$
\ket{n}=\ket{0}^{\otimes (n-1)}\ket{1}\ket{0}^{\otimes(N-n)}.
$$
The matrix $H$ is the same as the adjacency matrix of the underlying coupling scheme for a graph $G$, where $J_{ij}$ define undirected edge weights between pairs of vertices/qubits, and $B_i$ are self-weights. We do not place any explicit restrictions on which qubits are coupled, merely assume that the $G$ is connected.

The task of state transfer is to identify a Hamiltonian of the form in Eq.\ (\ref{eqn:exchange}), such that if we initialise the entire system in $\ket{0}^{\otimes N}$, place an unknown state $\ket{\psi}$ on an input qubit, and wait time $t_0$, then the state arrives perfectly an a different, output site:
$$
\ket{\psi}_\text{IN}\ket{0}^{\otimes (N-1)}\longrightarrow\ket{0}^{\otimes (N-1)}\ket{\psi}_\text{OUT},
$$
up to a possible phase rotation. This task will be important in quantum computers where algorithms require two-qubit gates to be applied between arbitrary pairs of qubits, but the actual couplings available are only between nearest neighbours on some underlying graph. It is also the perfect test candidate for a range of new techniques that could have much broader applicability.

The study of perfect state transfer started in \cite{bose2003}, where it was observed that states could perfectly transfer through a chain of length 2 or 3. Perfect transfer is impossible in longer chains of uniform coupling \cite{bose2003,christandl2005}, but can be found if engineering of the coupling strengths is allowed \cite{christandl2004,kay2010a,karbach2005}. However, until \cite{kay2022}, this was a fundamental limitation --- it seemed necessary to impose extremely stringent requirements on the manufacture of a quantum device. \cite{kay2022} started to hint that this would not be necessary, by taking a chain that did not have perfect transfer, and imbuing it with that property. The goal of this paper is to show to breadth of initial systems to which this construction can be applied.

Imagine we start with a state $\ket{\Psi}$ in the single excitation subspace. We want to evolve it under a symmetric Hamiltonian $H$ for a time $t$. This Hamiltonian has eigenvalues $\lambda_n$ and eigenvectors $\ket{\lambda_n}$. Let us define
$
a_n=\braket{\lambda_n}{\Psi}.
$
After time $t$, we have a new state $\ket{\Psi(t)}=\sum_n\ket{\lambda_n}a_ne^{-i\lambda_nt}$. Our target state is $S_{\mathcal{H}}\ket{\Psi}$. Since $H$ is symmetric, $S_{\mathcal{H}}\ket{\lambda_n}=\pm\ket{\lambda_n}$. Thus, the conditions on perfect state transfer are that
$$
\exists t_0,\phi:\left\{\begin{array}{cc}
 e^{-i\lambda_n t_0}=e^{i\phi},& \forall n: a_n\neq 0,\lambda_n\in\text{spec}(H_+) \\
e^{-i\lambda_n t_0}=-e^{i\phi},& \forall n: a_n\neq 0,\lambda_n\in\text{spec}(H_-).
  \end{array}\right.
$$
Up to an arbitrary scale factor and offset, it is convenient to think of the spectrum for a perfect state transfer system as a set of integers where the even (odd) integers are assigned to $H_+$ ($H_-$) and the perfect transfer time is $\pi$. 

\begin{lemma}\label{lem:close}
Any symmetric tridiagonal matrix $A$ is arbitrarily close to a perfect state transfer solution.
\end{lemma}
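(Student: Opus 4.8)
The plan is to recast perfect transfer as a commensurability condition on the spectrum of $A$, perturb the eigenvalues onto the nearest admissible configuration, and then reconstruct a nearby chain using Lemma~\ref{thm:reconstruct}. Since $A$ is symmetric, $A=A_+\oplus A_-$ and the eigenvectors of $A_\pm$ have definite parity under $S_\mathcal{A}$; for positive couplings the two sector spectra strictly interlace, so the ordered eigenvalues $\lambda_1<\lambda_2<\dots<\lambda_{N_A}$ alternate parity. Applying the transfer conditions of Sec.~\ref{sec:pst} to two consecutive eigenvalues, which necessarily sit in opposite sectors, forces $e^{-i(\lambda_{k+1}-\lambda_k)t_0}=-1$. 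Hence $A$ is a perfect transfer solution exactly when every gap $\lambda_{k+1}-\lambda_k$ is an odd integer multiple of a common unit $\delta=\pi/t_0$; conversely any such spectrum admits suitable $t_0$ and $\phi$, since stepping by odd multiples of $\delta$ flips $e^{-i\lambda t_0}$ in lockstep with the parity alternation.

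Next I would build a nearby spectrum of this form. Fixing a small $\delta>0$ and writing $g_k=\lambda_{k+1}-\lambda_k>0$, I round each gap to the nearest odd multiple $g_k'=(2m_k-1)\delta$, so that $|g_k'-g_k|\le\delta$ and $m_k\ge1$ once $\delta$ is small. Anchoring $\mu_1$ at the multiple of $\delta$ nearest $\lambda_1$ and setting $\mu_{k+1}=\mu_k+g_k'$ gives a spectrum in which every $\mu_n$ is an integer multiple of $\delta$, the parities still alternate (each step being odd), and $|\mu_n-\lambda_n|\le N_A\delta$. Keeping each $\mu_n$ in the sector of $\lambda_n$, the perturbed sector spectra strictly interlace and $\{\mu_n\}$ is a genuine perfect transfer spectrum (take $t_0=\pi/\delta$ and $\phi\in\{0,\pi\}$), while $\{\mu_n\}\to\text{spec}(A)$ as $\delta\to0$.

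I would then reconstruct the matrix. The two perturbed sector spectra are real, of the correct degrees, and strictly interlacing, so they fix a unique mirror-symmetric tridiagonal matrix $A'$ with positive couplings -- this is the inverse eigenvalue problem underlying the standard construction of perfect transfer chains, a consequence of Lemma~\ref{thm:reconstruct} -- and $A'$ is by design a perfect transfer solution. Because this reconstruction is continuous on the open region of strictly interlacing data, and the spectral data of $A$ lie strictly inside that region, $A'\to A$ as $\delta\to0$, producing perfect transfer solutions arbitrarily close to $A$.

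The main obstacle is the robustness of the reconstruction: I must establish that a small change in the (strictly interlacing) spectral data yields a correspondingly small change in the reconstructed couplings, which is precisely what upgrades ``nearby spectrum'' to ``nearby matrix.'' A secondary technical point is preserving strict interlacing throughout; should the given $A$ fail to be fully supported -- for instance a reducible chain with a vanishing coupling, whose sector spectra touch -- I would first apply an arbitrarily small preliminary perturbation to move its data strictly inside the interlacing region before performing the rounding.
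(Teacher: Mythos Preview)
Your argument is correct and follows essentially the same route as the paper: perturb the spectrum of $A$ onto a nearby configuration compatible with the transfer conditions, then invoke the inverse eigenvalue problem for centrosymmetric Jacobi matrices together with continuity of that reconstruction (the paper cites Hald~\cite{hald1976} for the continuity step). The only cosmetic differences are that the paper rounds each eigenvalue directly to the nearest integer of the required parity (per-eigenvalue error $O(\epsilon)$) rather than rounding the gaps (accumulated error $O(N_A\delta)$), and that the reconstruction you actually need---a centrosymmetric chain from its two sector spectra---is a standard companion result in \cite{gladwell2005} rather than Lemma~\ref{thm:reconstruct} itself, which reconstructs from $Q_A$ and $P_A$.
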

\begin{proof}
This strategy is due to \cite{karbach2005}. Let $A$ have eigenvalues $\lambda_n$. Define an accuracy $\epsilon$, where $2\epsilon<\min_{i\neq j}|\lambda_i-\lambda_j|$. Within the range $\lambda_n/\epsilon\pm 1$, there is an even integer and an odd integer. Select the value $\eta_n$ to be the one with parity $(-1)^{n+1}$ (which ensures an assignment that is consistent with the parity of the eigenvector $\ket{\lambda_n}$). Reconstruct a symmetric tridiagonal matrix with eigenvalues $\{\epsilon\eta_n\}$ \cite{gladwell2005}. This new system has perfect transfer in a time $\pi/\epsilon$, and all eigenvalues are within $O(\epsilon)$ of the original matrix. By continuity (see Theorem 4 of \cite{hald1976}), the output matrix must be close to the original.
\end{proof}

\subsection{Encoding for Perfect Transfer}

\begin{figure}
\centering
\includegraphics[width=0.48\textwidth]{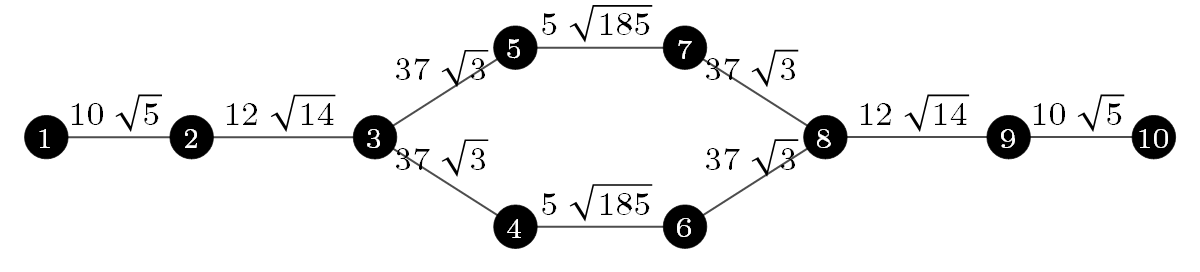}
\caption{Example spin network which is symmetric and has some eigenvalues that satisfy the perfect transfer conditions. Encoding is used to exclude the other eigenvectors. Vertices (numbered circles) correspond to qubits, while edges indicate a coupling with the given strength. $B_i=0$ for all vertices.}\label{fig:example}
\end{figure}

The conditions for perfect end-to-end transfer are well understood. Moving away from the ends of the chain is more complex, because we no longer know that all values $a_n$ must be non-zero, which potentially permits some eigenvalues to not match the transfer conditions and, in turn, this breaks the requirement that the chain be symmetric \cite{kay2010a,kay2011a}. Little progress \cite{kay2011a,coutinho2019} has been made in characterising these cases because it is difficult to coordinate non-satisfying eigenvalues to have eigenvectors that have no support on a specific site. Instead, we suggested \cite{kay2022} a mechanism that gives greater control: encoding over multiple input sites \cite{haselgrove2005} such that $a_n=0$ for all eigenvectors that cannot otherwise satisfy the eigenvalue conditions. If there are $M-1$ of these, this is achieved with an encoding region of size $M$ by finding the null space of the $\{\ket{\lambda_n}\}$ restricted to the encoding region. 

Consider, for example, the network depicted in Fig.\ \ref{fig:example}. This has eigenvalues $\pm\sqrt{185}\times\{1,2,5,6,10\}$. Without the $\pm 1,5$ terms, there would be perfect state transfer in the time $t=\pi/(4\sqrt{185})$. Next, we note that the eigenvectors with eigenvalue $\pm5\sqrt{185}$ only have support on vertices 4,5,6,7. On the encoding region (first 3 sites) the two eigenvectors corresponding to $\pm\sqrt{185}$ have components
$$
\ket{1}\mp\frac{\sqrt{37}}{10}\ket{2}-\frac{3}{8}\sqrt{\frac{7}{10}}\ket{3}.
$$
Orthogonal to these (and the $\pm5\sqrt{185}$ eigenvectors) is
$$
3 \sqrt{7}\ket{1}+ 8\sqrt{10}\ket{3},
$$
which in time $t=\pi/(4\sqrt{185})$ therefore transfers onto the decoding region (also size 3) as
$$
-i(3 \sqrt{7}\ket{10}+ 8\sqrt{10}\ket{8}).
$$
We have perfect encoded transfer of a single excitation.

This interpretation of encoding/decoding regions provides the opportunity to incorporate encoding into our analytic strategies, rather than just adding it in subsequently. Moving beyond the perfect transfer regime, we could encode against the `worst offender' eigenvalues.

\section{Reconstructing a matrix from mixed structural and spectral information}\label{sec:technical}

The setting is that we are given a matrix $B$. We can use this to build a matrix of the form of $H$. We do not yet know the values of $J$ or $J'$, or any of the properties of $A$, except that we want it to be tridiagonal. We can use these various parameters to help us create perfect encoded state transfer. In fact, we will not use the $J'$ freedom at all, and just fix it to an arbitrary value. We will allocate the eigenvalues of $H$ (which we don't yet know) two two sets: $\Gamma_P$ and $\Gamma_{\bar P}$. Those that match the perfect state transfer condition at time $t_0$ go in $\Gamma_P$. Those that do not are allocated to $\Gamma_{\bar P}$.The previous section conveyed that provided the size of the encoding region is larger than $|\Gamma_{\bar P}|$, perfect encoded transfer is possible\footnote{The encoding region might be considered to contain $N_A$ or $N_A+1$ qubits. It's a matter of taste/definition, and we give examples that incorporate both options.}.


Our aim is to discover how to specify the system $A$, giving us control over some of the eigenvalues so that we can force them to satisfy the perfect transfer conditions. In fact, everything from \cite{kay2022} may be directly imported.

If we want $H$ to have a particular eigenvalue $\lambda$ in the subspace $\sigma$, then it must satisfy $Q_{H_\sigma}(\lambda)=0$. By Eq.\ (\ref{eqn:overall characteristic}), this simplifies to
$
J^2\mu_A(\lambda)\mu_{C_\sigma}(\lambda)=1.
$

\begin{theorem}\label{thm:fixedmiddle}
Given the known matrices $C_{\pm}$, along with two sets of eigenvalues $\Lambda_+=\{\lambda_n^+\}$ and $\Lambda_-=\{\lambda_n^-\}$, we can construct the unique solution for $H$ with eigenvalues $\Lambda_+$ in the symmetric subspace and $\Lambda_-$ in the antisymmetric subspace, satisfying $|\Lambda_+|+|\Lambda_-|=2N_A-1$ ($J$ known) or $|\Lambda_+|+|\Lambda_-|=2N_A$ ($J$ unknown), if the solution exists.
\end{theorem}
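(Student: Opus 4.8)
The plan is to convert the prescribed-eigenvalue conditions into a rational interpolation problem for the function $\mu_A$, solve it by linear algebra, and then invoke Lemma~\ref{thm:reconstruct} to recover $A$. By Eq.~(\ref{eqn:overall characteristic}), demanding that $\lambda\in\Lambda_\sigma$ be an eigenvalue of $H_\sigma$ is equivalent to $J^2\mu_A(\lambda)\mu_{C_\sigma}(\lambda)=1$, i.e.
\[
\mu_A(\lambda)=\frac{P_A(\lambda)}{Q_A(\lambda)}=\frac{Q_{C_\sigma}(\lambda)}{J^2 P_{C_\sigma}(\lambda)},
\]
whose right-hand side is fully determined by the given $C_\sigma$ (assuming, as we may, that the targets avoid the roots of $P_{C_\sigma}$ and $Q_{C_\sigma}$, so these values are finite and nonzero). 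Clearing the denominator $Q_A(\lambda)$ turns each such condition into a single linear equation in the unknown coefficients of the monic polynomials $P_A$ (degree $N_A-1$) and $Q_A$ (degree $N_A$), together with $J^2$ when $J$ is unknown.

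First I would count parameters against conditions: $P_A$ contributes $N_A-1$ and $Q_A$ contributes $N_A$ free coefficients, for $2N_A-1$ unknowns, matching $|\Lambda_+|+|\Lambda_-|=2N_A-1$ when $J$ is fixed. Treating $\widehat P:=J^2P_A$ as the unknown numerator (degree $N_A-1$, leading coefficient $J^2$) adds one more unknown and raises the count to $2N_A$, matching the $J$-unknown case. In each case the system is square.

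Next I would establish uniqueness by a degree count. Suppose $(P_A,Q_A,J)$ and $(\tilde P_A,\tilde Q_A,\tilde J)$ both satisfy all the conditions. In the $J$-known case, $D:=P_A\tilde Q_A-\tilde P_A Q_A$ vanishes at every prescribed $\lambda$ (substitute the common interpolation value); since both products are monic of degree $2N_A-1$ their leading terms cancel, so $\deg D\le 2N_A-2$, yet $D$ has $2N_A-1$ distinct roots, forcing $D\equiv 0$. In the $J$-unknown case the analogous combination $J^2P_A\tilde Q_A-\tilde J^2\tilde P_A Q_A$ has degree at most $2N_A-1$ but $2N_A$ roots, again vanishing identically, and equating leading coefficients yields $J=\tilde J$. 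Either way $P_A/Q_A=\tilde P_A/\tilde Q_A$; because strict interlacing (Lemma~\ref{lem:StrictInterlace}) makes $P_A,Q_A$ coprime, the monic normalisation forces $P_A=\tilde P_A$ and $Q_A=\tilde Q_A$. Lemma~\ref{thm:reconstruct} then recovers a single tridiagonal $A$, so $H$ is unique.

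To actually construct the solution I would solve the square linear system for the coefficients of $P_A$, $Q_A$ and, if needed, $J^2$, and then apply Lemma~\ref{thm:reconstruct}. This is where the ``if the solution exists'' hypothesis does its work, and where I expect the real obstacle to lie: linear interpolation returns \emph{some} pair $(P_A,Q_A)$, but Lemma~\ref{thm:reconstruct} produces a genuine tridiagonal matrix with positive couplings only when the resulting roots are real and strictly interlacing. Interpolation alone gives no such guarantee, so existence is genuinely conditional --- precisely the gap that the following section closes by choosing the target spectrum so that the sign-change analysis of $\mu_A$ (Fig.~\ref{fig:tape}) pins the roots into the required interlacing pattern.
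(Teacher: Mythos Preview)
Your proposal is correct and follows essentially the same route as the paper: convert the eigenvalue constraints into a square linear system for the coefficients of $P_A$ and $Q_A$ (absorbing $J^2$ into the numerator when $J$ is unknown), solve, and then invoke Lemma~\ref{thm:reconstruct}; your explicit degree-count uniqueness argument actually goes beyond the paper, which simply asserts uniqueness of the square system. One small slip: the coprimality of $P_A$ and $Q_A$ you use at the end follows from the strict-interlacing property of tridiagonal matrices (equivalently the hypotheses of Lemma~\ref{thm:reconstruct}, which you are assuming under ``if the solution exists''), not from Lemma~\ref{lem:StrictInterlace}, which concerns $C_\pm$.
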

\begin{proof}
We need the function $\mu_A(z)$ to satisfy
\begin{align*}
J^2\mu_A(\lambda)\mu_{C_+}(\lambda)=1&\quad\forall \lambda\in\Lambda_+, \\
J^2\mu_A(\lambda)\mu_{C_-}(\lambda)=1&\quad\forall \lambda\in\Lambda_-.
\end{align*}
If $J$ is known, then $J^2\mu_{C_{\pm}}$ can be evaluated for each value of $\lambda_n^{\pm}$, determining $\mu_A(z)$ at various positions. Since
$$
\mu_A(z)=\frac{P_A(z)}{Q_A(z)}=\frac{\sum_{i=0}^{N_A-1}a_iz^i}{\sum_{i=0}^{N_A}b_iz^i}
$$
where $a_{N_A-1}=b_{N_A}=1$, this is a linear problem in the coefficients $\{a_i\}$ and $\{b_i\}$. Explicitly, one must solve
\begin{widetext}
\begin{equation}\label{eqn:jacobistyle}
\left(\begin{array}{cc}
\displaystyle\sum_{n=1}^{|\Lambda_+\cup\Lambda_-|}\sum_{i=0}^{N_A-2}\lambda_n^i\ket{n}\bra{i} & -\displaystyle\sum_{n=1}^{|\Lambda_+\cup\Lambda_-|}\sum_{i=0}^{N_A-1}\mu_A(\lambda_n)\lambda_n^i\ket{n}\bra{i}
\end{array}\right)\left(\begin{array}{c} \vec{a} \\ \vec{b} \end{array}\right)=
\sum_{n=1}^{|\Lambda_+\cup\Lambda_-|}(\mu_A(\lambda_n)\lambda_n^{N_A}-\lambda_n^{N_A-1})\ket{n}.
\end{equation}
\end{widetext}
If, instead, $J$ is unknown, then we define
$$
J^2\mu_A(z)=\frac{J^2P_A(z)}{Q_A(z)}=\frac{\sum_{i=0}^{N_A-1}a_iz^i}{\sum_{i=0}^{N_A}b_iz^i}
$$
where $a_{N_A-1}=J^2$ instead of 1. We solve an equivalent problem to before,
and instead return $J=\sqrt{a_{N_A-1}}$ and
$$
P_A(z)=\frac{1}{a_{N_A-1}}\sum_ia_iz^i.
$$

Assuming there is a solution, it is unique, and in principle it is readily found (there are numerical challenges since the matrix that needs to be inverted is closely related to a Vandermonde matrix, and suffers from the same numerical instabilities). Once we have the polynomials, Lemma \ref{thm:reconstruct} yields $A$ and hence $H$.
\end{proof}

The only element to be added to the presentation of \cite{kay2022}, which only explicitly considered a uniformly coupled chain for the system $B$, was to add was the symmetrisation of $B$. The numerical tests in \cite{kay2022} were extremely successful. Essentially, this is because the model is very forgiving. Now, however, we must deal with a wide range of situations in $B$, including extremes such as Anderson localised chains.  Do solutions always exist? How are we to select target eigenvalues to ensure a solution exists? What are the consequences for the transfer time?

We are now in a position to state our central thesis, which we will build towards proving in the next section.
\begin{theorem}\label{thm:centralthesis}
For every $B$ such that the eigenvalues of $C_+$ are distinct from those of $C_-$ (for any $J'$), there exist extensions $A$ of tridiagonal form that permit perfect (encoded) state transfer.
\end{theorem}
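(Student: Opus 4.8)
The plan is to reduce the claim to the interpolation problem of Theorem~\ref{thm:fixedmiddle}, and then to use the sign-change machinery of Section~\ref{sec:key observations} to certify that a valid solution exists. Fixing an accuracy $\epsilon$ as in Lemma~\ref{lem:close}, I would work on the rescaled lattice $\epsilon\mathbb{Z}$, so that any target eigenvalue placed at an even (resp.\ odd) multiple of $\epsilon$ and assigned to the symmetric (resp.\ antisymmetric) subspace automatically meets the perfect-transfer conditions for time $\pi/\epsilon$. The full spectrum of $H$ contains $2(N_A+N_B)$ eigenvalues, of which Theorem~\ref{thm:fixedmiddle} pins down $2N_A$ through a choice of $\Lambda_+\cup\Lambda_-$ (taking $J$ unknown); the remaining $2N_B$ are uncontrolled and constitute $\Gamma_{\bar P}$. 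Because this residual count does not grow with $N_A$, I am free to enlarge $N_A$, which is precisely what will make room for the encoding later.

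The core of the proof is to select the $2N_A$ targets so that the reconstructed $\mu_A=P_A/Q_A$ is a genuine Weyl function, i.e.\ $P_A$ and $Q_A$ have real, correctly-degreed, strictly interlacing roots --- the hypotheses of Lemma~\ref{thm:reconstruct}. Here I would lean on the two observations closing Section~\ref{sec:key observations}: at each target $\lambda$ in subspace $\sigma$ the value $\mu_A(\lambda)=1/(J^2\mu_{C_\sigma}(\lambda))$ is forced to carry the sign of $\mu_{C_\sigma}(\lambda)$, and whenever consecutive targets carry opposite signs, $\mu_A$ must change sign in between, depositing a root of $P_A$ or $Q_A$ in that gap. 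Since $\mu_A=\bra{1}(z\identity-A)^{-1}\ket{1}$ is strictly decreasing between its poles, a downward sign change within a branch is a zero of $P_A$, whereas an upward one is forced to straddle a pole, a root of $Q_A$. Sweeping the targets in increasing order of $z$ and alternating their subspace assignment in lockstep with the known sign diagrams of $\mu_{C_+}$ and $\mu_{C_-}$, I can dictate where each of the $N_A$ poles and $N_A-1$ zeros is obliged to lie, forcing strict interlacing and handing the data to Theorem~\ref{thm:fixedmiddle} to return $A$. The distinctness hypothesis on $\text{spec}(C_+)$ and $\text{spec}(C_-)$ enters exactly here: it keeps the poles of $\mu_{C_+}$ and $\mu_{C_-}$ apart so that the even- and odd-parity targets can be interleaved consistently and no value is ever demanded to lie in both $\text{spec}(H_+)$ and $\text{spec}(H_-)$, a cross-subspace collision that a single parity label could not accommodate.

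Finally, with $A$ and hence $H$ constructed, the $2N_A$ controlled eigenvalues sit at perfect-transfer positions while the $2N_B$ eigenvalues of $\Gamma_{\bar P}$ are arbitrary but real. Taking $N_A$ large enough that the encoding region exceeds $|\Gamma_{\bar P}|=2N_B$ guarantees a nonzero input state orthogonal to every $\Gamma_{\bar P}$ eigenvector, which by Section~\ref{sec:pst} transfers perfectly. I expect the genuine obstacle to live in the second paragraph: upgrading the qualitative sign-change observations into a placement of targets that provably realises the \emph{entire} strictly-interlacing pattern for every admissible $B$ --- in particular checking that the forced roots exhaust all $2N_A-1$ roots of $P_A$ and $Q_A$ without duplication, and handling the edge cases where targets must approach the poles or zeros of $\mu_{C_\sigma}$. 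The positivity of the residues of $\mu_A$, the remaining ingredient for a bona fide Jacobi matrix, then comes for free from Lemma~\ref{thm:reconstruct} once interlacing is secured.
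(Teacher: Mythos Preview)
Your overall framework matches the paper's: reduce to Theorem~\ref{thm:fixedmiddle}, use the sign structure of $\mu_{C_\pm}$ to place targets, and encode away the $2N_B$ uncontrolled eigenvalues. You also correctly locate the real difficulty --- certifying that the interpolated $P_A/Q_A$ has strictly interlacing roots --- and correctly identify where the distinctness hypothesis on $\text{spec}(C_\pm)$ enters.

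However, the mechanism you propose for forcing interlacing is circular. You invoke that ``$\mu_A=\bra{1}(z\identity-A)^{-1}\ket{1}$ is strictly decreasing between its poles'' and use this to tell zeros of $P_A$ from poles of $Q_A$ by the direction of the sign change. That monotonicity is a property of the Weyl function of a genuine Jacobi matrix; the rational function returned by the interpolation step is only known to be one \emph{after} interlacing has been established. Until then $P_A/Q_A$ is merely a ratio of polynomials, and a sign change between two consecutive targets could be produced by any odd number of roots of $P_A$ and $Q_A$, in any order.

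The paper closes this gap with two ingredients you are missing. First, instead of placing targets directly on the perfect-transfer lattice, it flanks each root of $Q_{C_\sigma}$ by a pair of targets chosen so that $\mu_A$ is forced to take the \emph{same constant value} $\pm\epsilon$ there (the ``pair pinning'' of Lemma~\ref{example}). With this special data the linear system for $(P_A,Q_A)$ can be solved in closed form via Vandermonde matrices, and one reads off that $Q_A$ changes sign across every flanking pair; since there are exactly $N_A$ such pairs and $\deg Q_A=N_A$, this accounts for all poles, and the remaining $N_A-1$ sign changes are then forced to be zeros of $P_A$, yielding strict interlacing by counting rather than by monotonicity. Second, because those constant-$\epsilon$ targets do not sit on the perfect-transfer lattice, the paper invokes continuity (Lemma~\ref{lem:close} together with the stability of the Jacobi reconstruction) to perturb them onto it while preserving the strict interlacing. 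Your one-shot placement on the lattice $\epsilon\mathbb{Z}$ skips both steps, which is why the interlacing argument does not close.
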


For a chain (i.e.\ tridiagonal $B$), the eigenvalues are always distinct, as justified by the strict interlacing property of the eigenvalues for the symmetric central block
$$
C=\left(\begin{array}{ccc|ccc}
&&&&& \\
& B &&&& \\
&&&J'&& \\\hline
&&J'&&& \\
&&&&S_{\mathcal{B}}BS_{\mathcal{B}} & \\
&&&&&
\end{array}\right).
$$

We further note that if the addition of so many vertices seems excessive, one could instead apply another result from \cite{haselgrove2005}, which allows one to replace the entire length of the added system $A$ with a single qubit and time control of the coupling $J$ (see Fig.\ \ref{fig:schematic}). In that context, Theorem \ref{thm:centralthesis} is equivalent to reachability results in control theory \cite{pemberton-ross2010}. Unlike control theory, this is a constructive result, revealing exactly how to do the transfer, and providing insights about how long that will take.

\section{Towards Existence}

The great challenge for us is to find cases that work. If a solution exists, it's unique, but when does a solution exist? There are three key conditions (Theorem \ref{thm:reconstruct})  which, taken together, are necessary and sufficient:
\begin{enumerate}
\item All the roots of $P_A$ and $Q_A$ must be real.
\item $J$ must be real ($J$ unknown) or $P_A$ must be monic ($J$ known).
\item The roots of $Q_A$ and $P_A$ must strictly interlace. 
  \end{enumerate}
In this section, we will show how to specify the (partial) target spectrum $\Lambda$ in order to guarantee that these conditions are fulfilled.

We have already observed in Sec.\ \ref{sec:key observations} that we can ensure all the roots of $P_A$ and $Q_A$ are real: there are $2N_A-1$ roots in total, and we are selecting $2N_A$ target eigenvalues. Provided we select them such that the sign of $\mu_A(z)$ alternates, then we have pinned the positions of all the roots in the real range. Moreover, we will fix $J$ to be real by ensuring we start with the correct sign of $\mu_A(z)>0$ for the largest target eigenvalue.

Note that there is no limitation here in terms of (i) choosing the eigenvalues to satisfy the perfect transfer conditions (follow a construction akin to \ref{lem:close}), or (ii) the number of target eigenvalues that we can pick. If we compare the signs of $\mu_{C_\pm}(z)$, then there will be ranges of $z$ where both functions have the same sign. We will be able to select at most one target value of $\lambda$ in such a range. However, there will also be ranges (every second one assuming $\Sigma_{C_+}\cap\Sigma_{C_-}=\emptyset$) for which their signs are different. In such a range, one can pick as many target eigenvalues as desired: you just have to alternate the target symmetry. This is depicted in Fig.\ \ref{fig:double tape}.

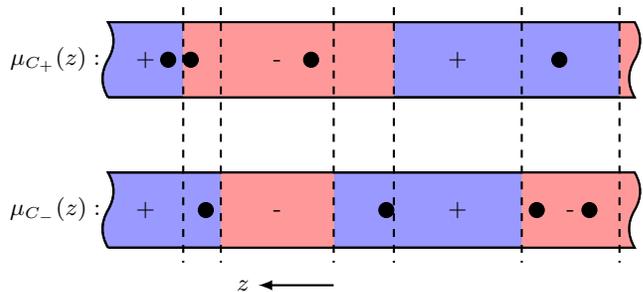
\begin{figure}
\centering
\begin{tikzpicture}
\begin{scope}
\node [anchor=center] at (-0.7,0.5) {$\mu_{C_+}(z):$};
\draw [fill=blue!40,thick] (0,0) -- (7,0) to [bend left] (7,0.5) to [bend right] (7,1) -- (0,1) to [bend left] (0,0.5) to [bend right] (0,0);
\draw [draw=none,fill=red!40] (1,0) rectangle (3,1);
\draw [draw=none,fill=red!40] (3,0) rectangle (3.8,1);
\draw [draw=none,fill=red!40] (6.8,0) -- (7,0) to [bend left] (7,0.5) to [bend right] (7,1) -- (6.8,1) -- cycle;
\node at (0.5,0.5) {+};
\node [yshift=-\MathAxis] at (2.25,0.5) {-};
\node at (4.65,0.5) {+};
\draw [fill=none,thick] (0,0) -- (7,0) to [bend left] (7,0.5) to [bend right] (7,1) -- (0,1) to [bend left] (0,0.5) to [bend right] (0,0);
\end{scope}
\begin{scope}[yshift=-2cm]
\node [anchor=center] at (-0.7,0.5) {$\mu_{C_-}(z):$};
\draw [fill=blue!40,thick] (0,0) -- (7,0) to [bend left] (7,0.5) to [bend right] (7,1) -- (0,1) to [bend left] (0,0.5) to [bend right] (0,0);
\draw [draw=none,fill=red!40] (1.5,0) rectangle (3,1);
\draw [draw=none,fill=red!40] (5.5,0) -- (7,0) to [bend left] (7,0.5) to [bend right] (7,1) -- (5.5,1) -- cycle;
\draw [dashed,thick] (1,3.2) -- (1,-0.2);
\draw [dashed,thick] (1.5,3.2) -- (1.5,-0.2);
\draw [dashed,thick] (3,3.2) -- (3,-0.2);
\draw [dashed,thick] (3.8,3.2) -- (3.8,-0.2);
\draw [dashed,thick] (5.5,3.2) -- (5.5,-0.2);
\draw [dashed,thick] (6.8,3.2) -- (6.8,-0.2);
\node at (0.5,0.5) {+};
\node [yshift=-\MathAxis] at (2.25,0.5) {-};
\node at (4.65,0.5) {+};
\node [yshift=-\MathAxis] at (6.15,0.5) {-};
\draw [fill=none,thick] (0,0) -- (7,0) to [bend left] (7,0.5) to [bend right] (7,1) -- (0,1) to [bend left] (0,0.5) to [bend right] (0,0);
\end{scope}
\draw [-latex,thick] (3,-2.5) -- (2,-2.5) node [anchor=east] {$z$};
\filldraw (0.8,0.5) circle (0.1cm);
\filldraw (1.1,0.5) circle (0.1cm);
\filldraw (1.3,-1.5) circle (0.1cm);
\filldraw (2.7,0.5) circle (0.1cm);
\filldraw (3.7,-1.5) circle (0.1cm);
\filldraw (5.7,-1.5) circle (0.1cm);
\filldraw (6,0.5) circle (0.1cm);
\filldraw (6.4,-1.5) circle (0.1cm);
\end{tikzpicture}
\caption{Sign of the values of the functions $\mu_{C_{\pm}}(z)$. Dashed lines indicate positions of a member of $\Sigma_{C_{\pm}}$, inducing a sign change in the $\mu$ of the same symmetry. In bands where both $\mu_{C_{\pm}}(z)$ have the same sign, select a maximum of one target eigenvalue. In regions where they differ, select as many target eigenvalues as desired, provided they alternate in symmetry. A possible selection is depicted by black circles, with symmetry indicated by vertical position. As one changes $z$, they alternate sign, starting with a positive sign at large $z$.}\label{fig:double tape}
\end{figure}

So, by judicious choice of the target spectrum, we can guarantee that all the roots of $Q_A$ and $P_A$ are real, and that $J$ is real, even if we don't (yet) know anything about the interlacing of the roots.

\subsection{Interlacing of \texorpdfstring{$P$}{P} and \texorpdfstring{$Q$}{Q}}

Of the three necessary and sufficient conditions,  we have shown that two can be simultaneously satisfied, and are compatible with the conditions for perfect state transfer. It remains to ensure that the roots of $P$ and $Q$ interlace. Predictable interlacing properties are known to be a particular challenge \cite{maione2011}. 

We introduce a strategy which we refer to as ``pair pinning'', in which we place target eigenvalues either side of a zero of $Q_C$. Provided they are close enough that there is no interference from other members of $\Sigma_{C_{\pm}}$, the two targets yield opposite signs for $\mu_A(z)$, helping to force a root or pole. One might have guessed that this would force a 0 of $P_A$ as our chosen points move closer to the zero. This is certainly not always the case, as conveyed by the case where we \emph{only} specify two target eigenvalues, meaning that $Q_A$ has degree 1 and $P_A$ has degree 0. If there is a solution, the root \emph{must} be a root of $Q_A$. So, select any $\sigma$ that is a root of $Q_C$ and place $\lambda_1,\lambda_2$ either side of it. Interlacing is trivially satisfied, and we have guaranteed the other two necessary conditions. 

Let us generalise this to a broader class of examples.
\begin{lemma}\label{example}
Consider two sets of points $\underline{\lambda}^{\pm}$, both of size $N$ such that $\lambda^+_i>\lambda^-_i>\lambda^+_{i+1}$. If $\mu(z)=P_A(z)/Q_A(z)$ satisfies $\mu(\lambda^\sigma)=\sigma\epsilon$ for $\lambda^{\sigma}\in\underline{\lambda}^{\sigma}$ and constant $\epsilon$, then $Q_A(z)$ has roots in the ranges $(\lambda_i^-,\lambda_i^+)$. 
\end{lemma}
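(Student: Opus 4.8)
The plan is to convert the $2N$ interpolation conditions into two polynomial factorisations and thereby obtain a closed form for $Q_A$ alone, from which the root locations can be read off directly. Writing the hypotheses as the (polynomial, division-free) conditions $P_A(\lambda_i^+)=\epsilon Q_A(\lambda_i^+)$ and $P_A(\lambda_i^-)=-\epsilon Q_A(\lambda_i^-)$, I would introduce the auxiliary polynomials $f_\pm(z)=P_A(z)\pm\epsilon Q_A(z)$, so that $f_-$ vanishes at every $\lambda_i^+$ and $f_+$ vanishes at every $\lambda_i^-$. Since $Q_A$ is monic of degree $N$ (this is the $J$-unknown instance of Theorem~\ref{thm:fixedmiddle}, so $N=N_A$) while $\deg P_A\le N-1$, each $f_\pm$ has degree exactly $N$ with leading coefficient $\pm\epsilon$; as the $N$ prescribed roots are distinct by the stated ordering, the factorisation is forced:
\begin{equation*}
f_-(z)=-\epsilon\,p^+(z),\qquad f_+(z)=\epsilon\,p^-(z),
\end{equation*}
where $p^\pm(z)=\prod_{i=1}^N(z-\lambda_i^\pm)$ are the monic polynomials carrying the target roots.

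Subtracting the two identities eliminates $P_A$ entirely, giving $f_+(z)-f_-(z)=2\epsilon Q_A(z)=\epsilon\bigl(p^+(z)+p^-(z)\bigr)$, hence
\begin{equation*}
Q_A(z)=\half\bigl(p^+(z)+p^-(z)\bigr).
\end{equation*}
This is the crux of the argument; note that it holds for either sign of $\epsilon$ and uses nothing about $P_A$ beyond its degree bound. (The companion identity $P_A=\half\epsilon(p^--p^+)$ falls out of the sum and confirms $\deg P_A=N-1$ with the correct leading sign, but it is not needed.)

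With $Q_A$ in hand I would evaluate it at the endpoints of each target interval and compare signs. At $z=\lambda_i^+$ the factor $p^+$ vanishes, so $Q_A(\lambda_i^+)=\half p^-(\lambda_i^+)$, and at $z=\lambda_i^-$ we get $Q_A(\lambda_i^-)=\half p^+(\lambda_i^-)$. The interleaving $\lambda_1^+>\lambda_1^->\cdots>\lambda_N^+>\lambda_N^-$ then fixes these signs by a direct count of prescribed roots lying above each evaluation point: exactly $i-1$ of the $\lambda_j^-$ exceed $\lambda_i^+$, so $\operatorname{sgn}p^-(\lambda_i^+)=(-1)^{i-1}$, while exactly $i$ of the $\lambda_j^+$ exceed $\lambda_i^-$, so $\operatorname{sgn}p^+(\lambda_i^-)=(-1)^{i}$. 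These are opposite, so $Q_A$ changes sign across $(\lambda_i^-,\lambda_i^+)$ and the intermediate value theorem supplies a root there. Since the $N$ such intervals are disjoint and $Q_A$ has degree $N$, these account for all of its roots.

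The computation is essentially routine once the closed form is secured; the one genuinely load-bearing step — and the thing I would take care to justify — is the reduction to $Q_A=\half(p^++p^-)$, i.e.\ recognising that combining the two families of conditions produces a leading-coefficient cancellation isolating $Q_A$ from $P_A$. The subsequent sign bookkeeping is elementary, but I would verify the parity count against the smallest cases ($N=1$, where $Q_A$ is linear with its root at the midpoint of $(\lambda_1^-,\lambda_1^+)$, recovering the two-eigenvalue remark preceding the lemma, and $N=2$) to guard against an off-by-one error in the ordering.
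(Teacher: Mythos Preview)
Your proof is correct and reaches the same endpoint evaluations as the paper, namely $Q_A(\lambda_i^+)=\tfrac12\prod_n(\lambda_i^+-\lambda_n^-)$ and $Q_A(\lambda_i^-)=\tfrac12\prod_n(\lambda_i^--\lambda_n^+)$, followed by the identical sign-alternation count. The route, however, is genuinely different. The paper sets up the interpolation as a linear system in the coefficient vectors $\underline a,\underline b$, inverts the $2\times 2$ block Vandermonde structure, and then recognises $-V_-^{-1}\underline\lambda^{-\,N}$ as the coefficient vector of the monic polynomial with roots $\underline\lambda^-$; the endpoint values of $Q_A$ drop out of that interpretation. You bypass the linear algebra entirely by forming the auxiliary polynomials $f_\pm=P_A\pm\epsilon Q_A$, observing that each is degree $N$ with prescribed leading coefficient and $N$ known roots, and hence is forced to equal $\mp\epsilon\,p^\pm$ respectively; subtracting gives the explicit closed form $Q_A=\tfrac12(p^++p^-)$, which the paper never writes down. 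Your argument is shorter, avoids Vandermonde inverses, and as a bonus makes the global structure of $Q_A$ (and $P_A=\tfrac{\epsilon}{2}(p^--p^+)$) transparent; the paper's approach, on the other hand, sits more naturally alongside the matrix formulation of Theorem~\ref{thm:fixedmiddle} and would generalise more readily if the right-hand sides $\pm\epsilon$ were replaced by arbitrary data.
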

\begin{proof}
Consider the linear problem of finding $P_A(z)=\sum_{n=0}^{N-1}a_nz^n$ and $Q_A(z)=\sum_{n=1}^{N-1}b_nz^n+z^{N}$ that satisfy
$$
P_A(\lambda^{\sigma})=Q_A(\lambda^{\sigma})\epsilon\sigma.
$$
In other words,
$$
\left(\begin{array}{cc}
V_+ & -V_+ \\ V_- & V_-
\end{array}\right)\left(\begin{array}{c} \underline{a} \\ \epsilon\underline{b}\end{array}\right)=\epsilon\left(\begin{array}{c}
{\underline{\lambda}^+}^N \\
-{\underline{\lambda}^-}^N
\end{array}\right)
$$
where the Vandermonde matrices $V_{\pm}$ are formed from the elements of $\underline{\lambda}^{\pm}$,
$$
V=\sum_{n=1}^N\sum_{m=0}^{N-1}\lambda_n^m\ket{n}\bra{m}.
$$
The values of $Q_A(z)$ at positions $\underline{\lambda}^{\pm}$ are evaluated by
\begin{multline*}
\left(\begin{array}{c}
Q_A(\underline{\lambda}^+) \\
Q_A(\underline{\lambda}^-)
\end{array}\right)=
\left(\begin{array}{cc}
0 & V_+ \\
0 & V_-
\end{array}\right)\left(\begin{array}{c} \underline{a} \\ \underline{b}\end{array}\right)+\left(\begin{array}{c}
{\underline{\lambda}^+}^N \\
{\underline{\lambda}^-}^N
\end{array}\right)\\=\frac{1}{2}\left(\begin{array}{cc}
\identity & -V_+V_-^{-1} \\
-V_-V_+^{-1} & \identity
\end{array}\right)\left(\begin{array}{c}
{\underline{\lambda}^+}^N \\
{\underline{\lambda}^-}^N
\end{array}\right).
\end{multline*}

Consider the term $\vec{c}=-V_-^{-1}{\underline\lambda^-}^N$. Since $V_-\vec{c}+{\underline\lambda^-}^N=0$, $\vec{c}$ are the coefficients (of degrees 0 to $N-1$) of the degree $N$ monic polynomial with roots at $\underline{\lambda}^-$. Hence, $V^+\vec{c}+{\underline\lambda^+}^N$ (the top row of the matrix), evaluates that polynomial at the points $\underline{\lambda}^+$. In other words,
$$
Q_A(\lambda_i^+)=\prod_n(\lambda_i^+-\lambda_n^-),\qquad Q_A(\lambda_i^-)=\prod_n(\lambda_i^--\lambda_n^+).
$$
Given how these values interlace, the only terms that are different in sign are $(\lambda_i^+-\lambda_i^-)$ and $(\lambda_i^--\lambda_i^+)$ respectively: the two values of $Q_A$ have opposite signs. $Q_A$ has a root in the range $(\lambda_i^-,\lambda_i^+)$ for each $i$.
\end{proof}

We are finally in a position to prove the central claim of Theorem \ref{thm:centralthesis} that (almost) any symmetric fully supported $C$ can be symmetrically extended to provide perfect (encoded) state transfer.
\begin{proof}[Proof of Theorem \ref{thm:centralthesis}]
Let $\{\lambda_n\}$ be all the roots of $Q_{C_+}$ and $Q_{C_-}$, in descending order, and let $\sigma_n$ be such that $Q_{C_{\sigma_n}}(\lambda_n)=0$. We select a $\delta$ that is small enough such that $\lambda_n-\delta>\lambda_{n+1}+\delta$, and such that the appropriate polynomial $P_{C_{\sigma_n}}(z)$ does not change sign in the range $\lambda_n\pm\delta$ (this region is non-trivial by Lemma \ref{lem:StrictInterlace}), provided $Q_{C_{\pm}}$ do not share any roots.

By this choice, $f(z)=\frac{Q_{C_{\sigma_n}}(z)}{P_{C_{\sigma_n}}(z)}$ is continuous on the ranges $\lambda_n\pm\delta$, being positive in the range $(\lambda_n,\lambda_n+\delta)$, and negative in the range $(\lambda_n-\delta,\lambda_n)$. It is therefore possible to select values $\eta^{\pm}_n$ from within each of those ranges such that $f(\eta^{\pm}_n)=\pm\epsilon$ for some sufficiently small $\epsilon$. We know that a solution for this exists by Lemma \ref{example}. Moreover, this solution must be arbitrarily close to a perfect transfer solution via Lemma \ref{lem:close}. Hence by continuity, there must exist a perfect transfer solution.
\end{proof}
This result merely proves the existence of a solution, paying little heed of the transfer time, which will have large overheads due to several limits being taken. In our numerical examples, we ignore the need to force all the $\mu_A(z)$ values to be the same, and jump straight to finding the best matching perfect transfer eigenvalues. We have found this strategy to be extremely effective.

\subsection{Transfer Speed}

Our selection of eigenvalues via this pair pinning strategy reveals something about the state transfer speed. We supply two target eigenvalues for every root of $Q_C$. Hence, if the smallest gap in the spectrum of $C$ is $\Delta$, the smallest gap in $H$ is no larger than $\Delta$, and the state transfer time must take at least $\pi/\Delta$. This bounding of the state transfer time is important to understand -- what if we had started with a chain that had such strong magnetic fields, for example, that it was exhibiting Anderson localisation \cite{anderson1958}, implying that there is no transfer? It \emph{does not} imply that the transfer is impossible, just that it takes a long time \cite{burrell2007}, and we cannot expect to overcome that long time restriction by extending the chain --- one characterisation of Anderson localisation is that, in the long chain limit, there is a band where the spectrum is continuous. For a finite chain, these vanishingly small energy gaps must therefore yield extremely long transfer times. Localisation is retained, but the system does achieve perfect transfer, just very slowly (presumably, exponential in the chain length \cite{burrell2007}).

\section{Examples}\label{sec:examples}

\subsection{The Field-Free Case}

For the purposes of numerical examples, it will often be convenient to work with the field-free case, i.e.\ where the matrices $A$ and $B$ have 0 on the diagonal, as this halves the number of parameters involved.

\begin{definition}
Let $H$ be a symmetric Hermitian matrix that describes couplings on an underlying graph that is bipartite (two-colourable). Further, require that the diagonal elements are 0. Finally, impose that $S_\mathcal{H}\ket{i}$ yields a vertex of the opposite `colour' as vertex $i$ in the graph. We say that $H$ is field-free and even. 
\end{definition}

The `even' feature is built into our assumed structure of $H$, so this just requires that $B$ is bipartite.

\begin{lemma}
If $H$ is field-free and even, then
$$
\Lambda^-=\{-\lambda:\lambda\in\Lambda^+\}.
$$
\end{lemma}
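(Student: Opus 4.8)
The plan is to exploit the bipartite/field-free structure to show that the spectrum of $H$ is symmetric about $0$ under the combined action of an appropriate chiral (sublattice) symmetry, and then track how that sign-flip maps the symmetric subspace to the antisymmetric one (and vice versa). The key object is the diagonal operator $P$ that assigns $+1$ to vertices of one colour and $-1$ to the other. Since every edge of a bipartite graph connects opposite colours, and the diagonal (self-energy) entries all vanish, one has $PHP=-H$. Hence if $\ket{\lambda}$ is an eigenvector of $H$ with eigenvalue $\lambda$, then $P\ket{\lambda}$ is an eigenvector with eigenvalue $-\lambda$. This establishes that $\mathrm{spec}(H)$ is symmetric under negation.

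First I would set up this chiral operator carefully for the full $H$ (not just $B$), checking that the engineered extension $A$ inherits bipartiteness so that $P$ is globally well-defined. Because $A$ is tridiagonal and sits in the field-free even case, its two-colouring is forced to alternate along the chain, and the colourings must be compatible across the junctions $J$ and $J'$ so that $PHP=-H$ holds on all of $\mathcal{H}$. Second, I would analyse how $P$ and the mirror symmetry $S_{\mathcal{H}}$ interact. The crux is that the definition demands $S_{\mathcal{H}}\ket{i}$ lies on the \emph{opposite} colour to $\ket{i}$; this is exactly the statement that $PS_{\mathcal{H}}=-S_{\mathcal{H}}P$, i.e. $P$ anticommutes with the reflection. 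Therefore, if $\ket{\lambda}$ lies in the symmetric subspace ($S_{\mathcal{H}}\ket{\lambda}=+\ket{\lambda}$), then $S_{\mathcal{H}}(P\ket{\lambda})=-P(S_{\mathcal{H}}\ket{\lambda})=-P\ket{\lambda}$, so $P\ket{\lambda}$ lies in the antisymmetric subspace, and conversely.

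Combining the two steps gives the result: the map $\ket{\lambda}\mapsto P\ket{\lambda}$ is a bijection sending a $+$-subspace eigenvector of eigenvalue $\lambda$ to a $-$-subspace eigenvector of eigenvalue $-\lambda$. Hence the eigenvalues of $H_+$ (collected in $\Lambda^+$) are precisely the negatives of the eigenvalues of $H_-$ (collected in $\Lambda^-$), which is the claimed identity $\Lambda^-=\{-\lambda:\lambda\in\Lambda^+\}$. I would close by noting that multiplicities and the one-to-one correspondence survive because $P$ is an involution ($P^2=\identity$) and thus invertible, so no eigenvalue is lost or doubled.

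The main obstacle I anticipate is purely bookkeeping rather than conceptual: verifying that the global two-colouring of $H$ is consistent, that is, that $P$ can be defined coherently across the symmetrised block $S_{\mathcal{B}}BS_{\mathcal{B}}$ and the attached chains $A$ so that both $PHP=-H$ and the anticommutation $PS_{\mathcal{H}}=-S_{\mathcal{H}}P$ hold simultaneously. One must confirm that the reflection $S_{\mathcal{H}}$, which pairs the IN region with its mirror, respects the colour-flipping stipulated in the definition; the even/field-free hypothesis is exactly what guarantees this, but the alignment of the colourings at the central junction (where $B$ meets its mirror through $J'$) is the step that needs genuine care.
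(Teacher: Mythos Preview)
Your proof is correct and follows essentially the same route as the paper: define the sublattice sign operator (the paper calls it $D$, you call it $P$), use bipartiteness plus zero diagonal to get $DHD=-H$, and then use the ``even'' hypothesis to obtain the anticommutation $S_{\mathcal H}D=-DS_{\mathcal H}$ (equivalently $S_{\mathcal H}DS_{\mathcal H}=-D$), which flips the parity of $D\ket{\lambda}$. Your closing worry about aligning the two-colouring across $A$, $B$, and the mirror is unnecessary here, since the lemma's hypothesis already asserts that $H$ is field-free and even; that consistency is part of what is assumed, not something you must verify.
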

\begin{proof}
Since $H$ is bipartite, there is a partition of the vertices of $H$, $V$ such that $V=V_1\cup V_2$, $V_1\cap V_2=\emptyset$ and $H_{ij}=0$ for all choices of $i$ and $j$ that are in the same partition. As such, we can define
$$
D=\sum_{i\in V_1}\proj{i}-\sum_{i\in V_2}\proj{i}
$$
from which it follows that $D^2=\identity$ and $DHD=-H$. In other words, if $\lambda$ is an eigenvalue of $H$, then so is $-\lambda$.

Let's assume that $\ket{\lambda}$, the eigenvector associated with eigenvalue $\lambda$ is symmetric, i.e.\ $S_\mathcal{H}\ket{\lambda}=\ket{\lambda}$. Now, $\ket{-\lambda}=D\ket{\lambda}$. So, $S_\mathcal{H}\ket{-\lambda}=S_\mathcal{H}DS_\mathcal{H}^2\ket{\lambda}=S_\mathcal{H}DS_\mathcal{H}\ket{\lambda}$. $H$ is even by assumption, meaning that $S_\mathcal{H}DS_\mathcal{H}=-D$. Thus, $\ket{-\lambda}$ is anti-symmetric. 
\end{proof}

\begin{lemma}\label{lem:subsapces}
If the matrix $C$ describes couplings on an underlying graph that is bipartite (two-colourable) and field-free, then the polynomial $\mu_A(z)$ is antisymmetric.
\end{lemma}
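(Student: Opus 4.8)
The plan is to reduce the claim to a uniqueness argument. Recall that $\mu_A$ is pinned by the interpolation conditions $J^2\mu_A(\lambda)\mu_{C_+}(\lambda)=1$ for $\lambda\in\Lambda^+$ and $J^2\mu_A(\lambda)\mu_{C_-}(\lambda)=1$ for $\lambda\in\Lambda^-$, and that Theorem \ref{thm:fixedmiddle} guarantees this solution is unique. I would therefore show that the data defining $\mu_A$ is itself antisymmetric under $\lambda\mapsto-\lambda$, and then argue that $-\mu_A(-z)$ solves the very same interpolation problem, so that uniqueness forces $\mu_A(-z)=-\mu_A(z)$.

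The first step is to establish the key symmetry $\mu_{C_+}(z)=-\mu_{C_-}(-z)$. Let $D=\sum_{i\in V_1}\proj{i}-\sum_{i\in V_2}\proj{i}$ be the bipartite sign operator on $\mathcal{B}$, so that $DBD=-B$ and $D^2=\identity$. Because $D\proj{N_B}D=\proj{N_B}$ regardless of the colour of the output vertex, conjugating $C_\pm=B\pm J'\proj{N_B}$ gives $DC_\pm D=-B\pm J'\proj{N_B}=-C_\mp$; the crucial point is that the diagonal $\pm J'$ term is preserved with a fixed sign while the block $B$ flips, so conjugation swaps $C_+$ and $C_-$. Writing $z\identity-C_+=D(z\identity+C_-)D$ and using that $D\ket{1}=\pm\ket{1}$ squares away in the matrix element, I obtain $\mu_{C_+}(z)=\bra{1}(z\identity+C_-)^{-1}\ket{1}=-\mu_{C_-}(-z)$.

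With this in hand, I combine it with the field-free-even spectrum structure $\Lambda^-=-\Lambda^+$. For $\lambda\in\Lambda^+$ the condition reads $\mu_A(\lambda)=1/(J^2\mu_{C_+}(\lambda))$, while at the paired point $-\lambda\in\Lambda^-$ the condition together with the symmetry gives $\mu_A(-\lambda)=1/(J^2\mu_{C_-}(-\lambda))=-1/(J^2\mu_{C_+}(\lambda))=-\mu_A(\lambda)$. Hence the prescribed values of $\mu_A$ are antisymmetric. Finally, set $\hat\mu(z)=-\mu_A(-z)$; one checks $\hat\mu(\lambda)=\mu_A(\lambda)$ at every target and that, after renormalising the denominator to be monic, $\hat\mu$ has numerator and denominator of the required degrees and leading coefficients (the factors of $(-1)^{N_A}$ cancel). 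Thus $\hat\mu$ solves the same problem, and uniqueness yields $\mu_A(z)=-\mu_A(-z)$.

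The main obstacle is the bookkeeping in the first step: one must confirm that conjugation by $D$ sends $C_+$ to $-C_-$ rather than $-C_+$, which hinges on the diagonal $\pm J'$ entry surviving the conjugation with a fixed sign while the off-diagonal block flips, and that the resulting sign on $\ket{1}$ is immaterial because it appears squared. A secondary point to check is the degenerate case of a self-paired target at $\lambda=0$, where antisymmetry forces $\mu_A(0)=0$; this is consistent with selecting the $2N_A$ (rather than $2N_A-1$) target eigenvalues in $\pm$ pairs, so no genuine obstruction arises.
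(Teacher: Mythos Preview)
Your argument is correct and reaches the same conclusion, but the route differs from the paper's in two respects worth noting.

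For the key identity $\mu_{C_+}(z)=-\mu_{C_-}(-z)$, the paper lifts to the doubled space $\mathcal{C}$ on which the full symmetric block $C$ acts, introduces the vectors $\ket{s_\pm}=\tfrac{1}{\sqrt{2}}(\ket{1}\pm S_{\mathcal{C}}\ket{1})$, rewrites $\mu_{C_\sigma}(\lambda)=\bra{s_\sigma}(\lambda\identity-C)^{-1}\ket{s_\sigma}$, and then conjugates by the bipartite sign operator $\tilde D$ on $\mathcal{C}$ to obtain $\mu_{C_\sigma}(\lambda)=-\mu_{C_\tau}(-\lambda)$, with $\tau$ determined by whether $\tilde D$ exchanges or preserves the symmetry sectors (and $\tau=-\sigma$ in the even case). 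You instead stay on the smaller space $\mathcal{B}$ and observe algebraically that $DC_\pm D=-C_\mp$, because the off-diagonal part $B$ flips under conjugation while the diagonal rank-one piece $\pm J'\proj{N_B}$ is fixed. This is more elementary and sidesteps the lift to $\mathcal{C}$ entirely; the paper's version, on the other hand, makes visible how the sector exchange is tied to the ``even'' hypothesis through the action of $\tilde D$ on $\ket{s_\pm}$.

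The second difference is that the paper stops at ``the conditions on $\mu_A$ are antisymmetric'' and leaves the passage to antisymmetry of $\mu_A$ itself implicit (to be realised via the substitution $f(z)=zP(z^2)/Q(z^2)$ described immediately afterwards). You close this gap explicitly by invoking the uniqueness statement in Theorem~\ref{thm:fixedmiddle}: since $\hat\mu(z)=-\mu_A(-z)$ has numerator and denominator of the correct degrees and matches all $2N_A$ interpolation values, it must coincide with $\mu_A$. That is a clean way to finish and is a genuine addition to what the paper writes out.
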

\begin{proof}
If $C$ is bipartite, and $A$ is a field-free tridiagonal matrix, then $H$ overall is bipartite. Let $\tilde D$ be the restriction of $D$ to just the $C$ matrix such that $\tilde DC\tilde D=-C$.

Let's say we want to impose that $H$ has a specific eigenvalue $\lambda$. It must also have an eigenvalue $-\lambda$. Let us take the case that $\lambda$ is in the spectrum of $H_\sigma$ where $\sigma\in\pm$. The parity of the $-\lambda$ eigenvector is then fixed because $\ket{-\lambda}=D\ket{\lambda}$. Hence, we will be forced to select $-\lambda$ to be in the spectrum of $H_\tau$, $\tau\in\pm$. We thus require
$$
J^2\mu_A(\lambda)\mu_{C_\sigma}(\lambda)=J^2\mu_A(-\lambda)\mu_{C_\tau}(-\lambda)=1.
$$
Our statement is equivalent to $\mu_{C_\sigma}(\lambda)=-\mu_{C_\tau}(-\lambda)$.

We now introduce a vector
$$
\ket{s_{\pm}}=\frac{1}{\sqrt{2}}(\ket{1}\pm S_{\mathcal{C}}\ket{1}).
$$
Note that $\tilde D\ket{s_\sigma}=\ket{s_\tau}$. We can rewrite
\begin{align*}%
\mu_{C_{\sigma}}(\lambda)&=\bra{1}(\lambda\identity-C_{\sigma})^{-1}\ket{1} \\
&=\bra{s_\sigma}(\lambda\identity-C)^{-1}\ket{s_{\sigma}} \\
&=\bra{s_\sigma}(\lambda \tilde D^2+\tilde DC\tilde D)^{-1}\ket{s_{\sigma}} \\
&=-\bra{s_\sigma}\tilde D((-\lambda) \identity-C)^{-1}\tilde D\ket{s_{\sigma}} \\
&=-\bra{s_\tau}((-\lambda) \identity-C)^{-1}\ket{s_{\tau}} \\
&=-\mu_{C_\tau}(-\lambda).
\end{align*}
Hence, the conditions on $\mu_A$ are antisymmetric.
\end{proof}
\noindent  In the case of a symmetric and even matrix, $\tau=-\sigma$. One special case that this scenario covers is a chain -- we are given a symmetric field-free chain of even length and are trying to `complete' it by adding a field-free chain at either end. If we specify that all the eigenvalues we want should occur in $\pm\lambda$ pairs then since the overall size of the system is even, if $\lambda$ is an eigenvalue of $H_+$ then $-\lambda$ is an eigenvalue of $H_-$, and we know the corresponding values of $\mu$ should be opposite. This in turn means that the construction of $A$ will be field-free, and hence the overall matrix is field-free. Moreover, the fact that $\mu_A(z)$ is antisymmetric means that we should be able to specify it using only half the data points (e.g.\ just using those $\lambda$ that are in the spectrum of $H_+$). As indicated in \cite{kay2022}, this structure can be built in. In order to find an antisymmetric rational function $f(z)$ that satisfies $f(z_i)=f_i$ and $f(-z_i)=-f_i$, it is sufficient to find the rational function $g(z)$ which satisfies $\{g(z_i^2)=f_i/z_i\}$, such that if $g(z)=\dfrac{P(z)}{Q(z)}$, then \begin{equation}f(z)=\dfrac{zP(z^2)}{Q(z^2)}.\label{eq:sub}\end{equation}

\subsection{Central region with random errors} \label{sec:noisyuniform}

\begin{figure}
\centering
\includegraphics[width=0.45\textwidth]{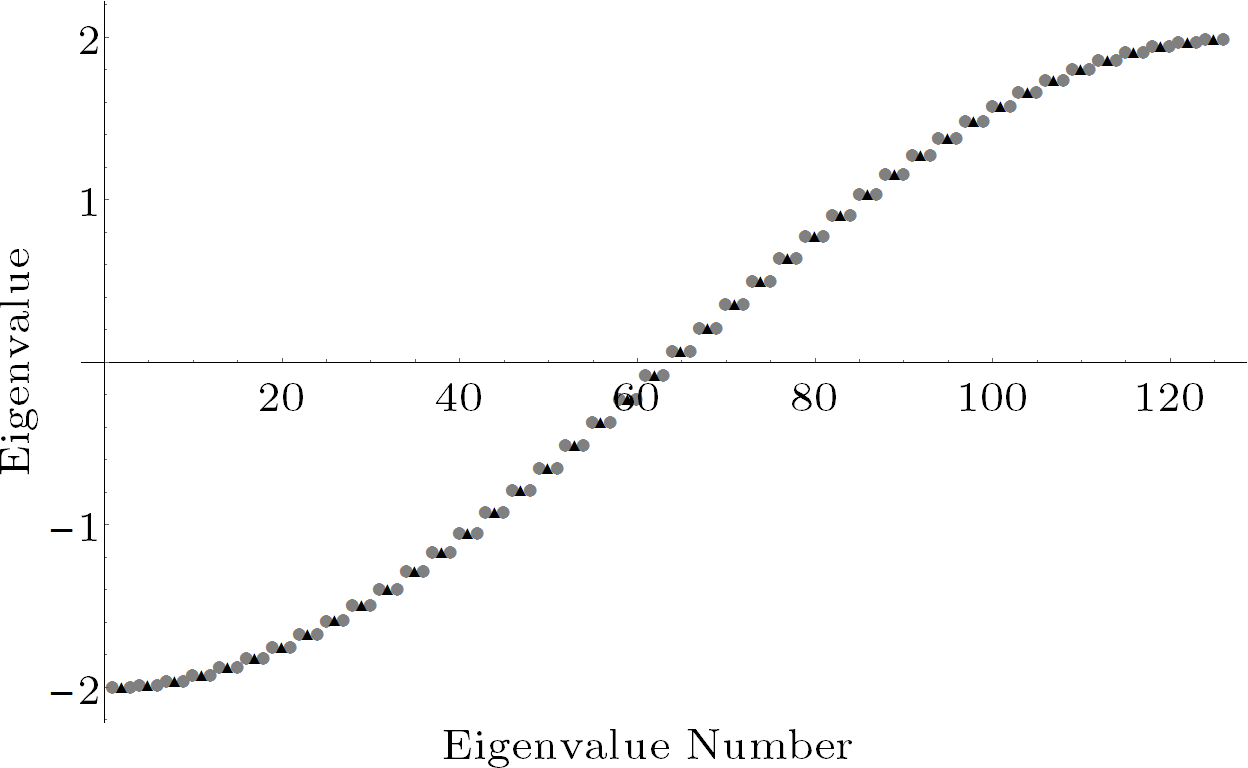}
\caption{New eigenvalue selection procedure implemented on a uniformly coupled chain of length 42. Controlled eigenvalues (grey circles) and uncontrolled (black triangles) are plotted.}\label{fig:uniform}
\end{figure}

Numerical results for a uniformly coupled chain were demonstrated in \cite{kay2022}. It behooves us to briefly revisit this case, as our strategy for selecting eigenvalues has now changed. While we previously achieved an almost perfectly linear spectrum, this was luck. If we apply our pair pinning strategy (Fig.\ \ref{fig:uniform}) for selecting eigenvalues, we can guarantee a solution, at the cost of a much higher state transfer time, $O(N^2)$ rather than $O(N)$. We see that controlled eigenvalues are chosen in pairs that are close to the eigenvalues of the original system. It appears that the uncontrolled eigenvalues are well approximated by the eigenvalues of the original system. For a chain, this is no surprise: having selected two eigenvalues of the same symmetry to pin the original eigenvalue, the strict interlacing of symmetric and antisymmetric eigenvalues means that there must be an antisymmetric eigenvalue pinned between those two points.

A natural generalisation to this is a chain where the couplings are not all equal. This is likely to be of greater practical interest. For example, we might randomly select couplings in the range $(J-\delta J,J+\delta J)$. We generally found that as $\delta J$ is increased if one tries to manually guess target eigenvalues, it becomes far more likely that, while $P$ and $Q$ can still be found, the eigenvalues do not interlace. Instead, we rely on the pair-pinning technique, with a consequential impact on transfer time.

To be explicit, we considered examples taken from the even sized, field-free case and:
\begin{itemize}
\item Calculated the eigenvalues $\Lambda$ of $C_+$, and the eigenvalues $\Gamma$ of its principal submatrix.
\item Evaluated $\delta_1=\min_{\lambda\in\Lambda,\gamma\in\Gamma}|\lambda-\gamma|$ and $\delta_2=\frac{1}{2}\min_{\lambda\in\Lambda,\gamma\in\Lambda}|\lambda+\gamma|$. From this, we calculated $\delta=\min(\delta_1,\delta_2)/2$.
\item For each $\lambda\in\Lambda$, selected a target eigenvalue in the range $(\lambda-2\delta,\lambda)$ and another in the range $(\lambda,\lambda+2\delta)$ satisfying the form $(2n+\frac{1}{2})\delta$.
\item Attempted to solve for a suitable extension. If the interlacing of roots $Q_A$ and $P_A$ was not satisfied, reduce $\delta$ and repeat.
  \end{itemize}
For all the random samples we tested (initial chain of 20 qubits, couplings chosen uniformly at random between $(0.9,1.1)$), reducing $\delta$ to $\delta=\min(\delta_1,\delta_2)/5$ successfully found a result (chain length 120, transfer time $\pi/\delta$). One such example is depicted in Fig.\ \ref{fig:nonuniform}.

\begin{figure}
\centering
\includegraphics[width=0.45\textwidth]{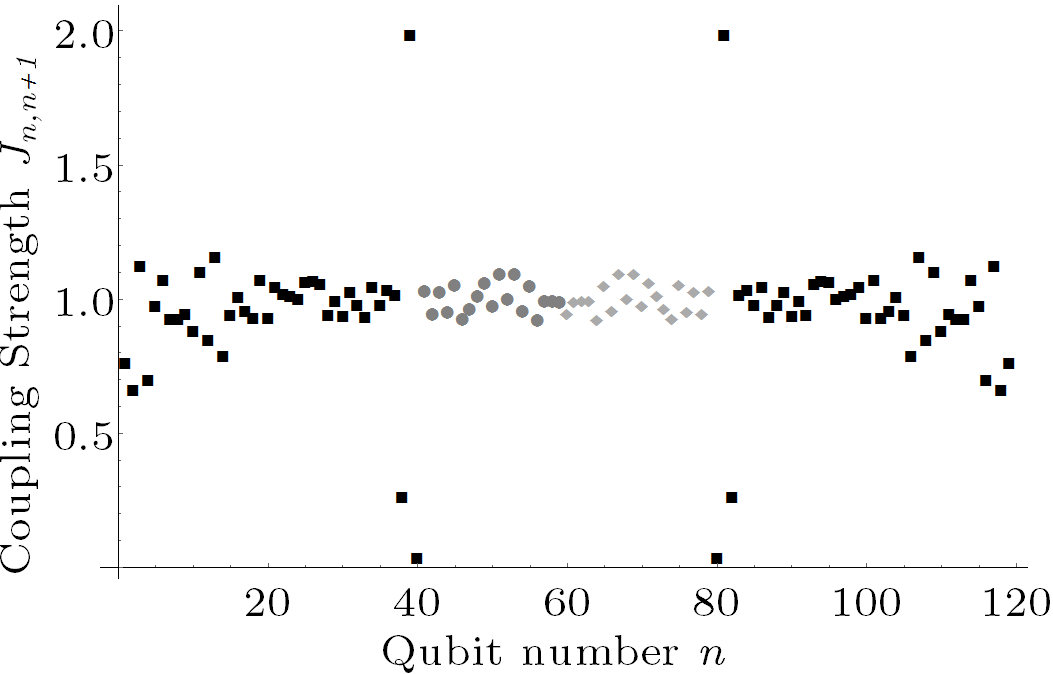}
\caption{A chain of 20 qubits with couplings randomly selected in the range $(0.9,1.1)$ (grey circles) can be symmetrised (diamonds) and extended (black squares), giving an overall chain of 120 qubits demonstrating perfect encoded transfer.}\label{fig:nonuniform}
\end{figure}

\subsubsection{Integer Selection}

There are surely better strategies that can help to optimise the transfer fidelity at shorter times. We did not explore these extensively. Nevertheless, note that there will be some original eigenvalues which are well separated compared to the scale $\delta$. As such, there may be many possible values of $n$ that could be chosen (even if our basic strategy requires us to select the one closest to the edge of the range). Given such a choice, it can help to select a value of $n$ with the greatest number of factors of 5 possible. This is because all cases which contain a factor of $5^k$ also satisfy the perfect transfer conditions at times $\pi/(5^k\delta)$. To see why this works, consider two different values of $\delta$, $\delta_X$ and $\delta_Y$. We note that if $\delta_Y=(4p+1)\delta_X$ for integer $p$, then
\begin{align*}
\left(2n\pm\frac{1}{2}\right)\delta_Y&=\left(2n\pm\frac{1}{2}\right)(4p+1)\delta_X \\
&=\left(2(4p+1)n\pm 2p\pm\frac12\right)\delta_X.
\end{align*}
Thus, $n'=(4p+1)n\pm p$. If some eigenvalues are of the form $(2n+\half)\delta_X$ and others are of the form $(2n+\half)\delta_Y=(2n'+\half)\delta_X$, then this shows that all eigenvectors satisfy the transfer conditions at $\pi/\delta_X$, while some also satisfy them at $\pi/\delta_Y$. Picking $p=1$ is the smallest case for which this works, giving the greatest range of options.


\subsection{Non-chain central region.}

All the examples that we have described so far just use a chain in the central region. We do not need to be restricted to this case, however. Any symmetric system will do. One advantage is that this may reduce the number of eigenvalues we have to consider due to the reduction to being fully supported (and therefore reduce the size of the required encoding region).

\begin{lemma}\label{lem:nosupport}
Let $\ket{\lambda}$ be an eigenvector of $C_+$ with eigenvalue $\lambda$ such that $\braket{1}{\lambda}=0$. Then $\ket{\lambda}$ is also an eigenvector of $H_+$ with eigenvalue $\lambda$.
\end{lemma}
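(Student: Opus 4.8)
The plan is to produce the eigenvector of $H_+$ explicitly and check the eigenvalue equation block by block, reusing the expansion already carried out in the proof of Lemma~\ref{lem:evector inverse}. Working in the block decomposition in which a vector of $H_+$ is written $\left(\begin{array}{c}\ket{a}\\ \ket{b}\end{array}\right)$ with $\ket{a}$ supported on the $\mathcal{A}$ region and $\ket{b}$ on the $C_+$ region, I would simply propose the candidate $\ket{a}=0$, $\ket{b}=\ket{\lambda}$ and verify that it satisfies $H_+\left(\begin{array}{c}0\\ \ket{\lambda}\end{array}\right)=\lambda\left(\begin{array}{c}0\\ \ket{\lambda}\end{array}\right)$.

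There are two coupled conditions to check. The $C_+$-block condition is exactly the one derived in Lemma~\ref{lem:evector inverse}, namely $C_+\ket{b}+J\ket{1}\braket{N_A}{a}=\lambda\ket{b}$; substituting $\ket{a}=0$ collapses the coupling term and leaves $C_+\ket{\lambda}=\lambda\ket{\lambda}$, which holds by hypothesis. The $\mathcal{A}$-block condition is the Hermitian partner, in which the only contribution from $\ket{b}$ is a boundary term proportional to the amplitude of $\ket{\lambda}$ on site $1$ of $C_+$ --- precisely the site to which the chain $A$ is attached. Since $\ket{a}=0$ makes the reflected-$A$ action vanish, this condition reduces to $J\ket{N_A}\braket{1}{\lambda}=0$, which is satisfied exactly because $\braket{1}{\lambda}=0$. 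Thus $\left(\begin{array}{c}0\\ \ket{\lambda}\end{array}\right)$ is an eigenvector of $H_+$ with eigenvalue $\lambda$.

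Conceptually the hypothesis $\braket{1}{\lambda}=0$ is doing all the work: it says $\ket{\lambda}$ has no weight on the single boundary site through which $C_+$ communicates with the extension $A$, so the excitation cannot leak out of the $C_+$ region and $H_+$ leaves $\ket{\lambda}$ invariant. This is nothing more than the ``unsupported state'' branch already isolated in Lemma~\ref{lem:evector inverse} (the case $\braket{\lambda}{1}=0$). I do not anticipate any real obstacle here; the only point requiring care is correctly matching the hypothesis against the boundary index --- that the coupling $J$ attaches to index $1$ of $C_+$ --- so that $\braket{1}{\lambda}=0$ is exactly the vanishing needed. Because the $\pm$ split enters only through $C_\pm=B\pm J'\proj{N_B}$ and not through the coupling structure, the identical computation yields the corresponding statement for $H_-$.
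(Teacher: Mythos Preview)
Your proposal is correct and is exactly the natural argument. The paper states this lemma without a separate proof, treating it as immediate; the verification you give is precisely the ``$\braket{\lambda}{1}=0$'' branch already written out in the proof of Lemma~\ref{lem:evector inverse}, so your approach coincides with what the paper implicitly relies on.
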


The key point here is that such eigenvectors will have no support on the encoding and decoding regions. Hence, we do not have to prevent their use via encoding. We need only consider those eigenvectors of $C_+$ which have support on the first site. (To contrast with the case of the chain, all eigenvectors of a coupled chains have support on the first site.) As such, it is sufficient to reduce to fully supported systems.

Consider the example network depicted in Fig.\ \ref{fig:notchain}(a). We want to add an extension to vertices 1 and 6 to achieve (encoded) perfect transfer. We first note that there is a basis spanned by $(\ket{2}-\ket{3})/\sqrt{2}$ and $(\ket{3}-\ket{5})/\sqrt{2}$. Hence, we only need to consider the effective graph \cite{kay2005} depicted in Fig.\ \ref{fig:notchain}(b) in which we have excluded the two unsupported eigenvectors.
This is symmetric bipartite, and even. We choose to extend the system by 4 qubits at either end, and solve for the case that the symmetric subspace has eigenvalues $\frac{11}{4},\frac{7}{4},\frac{3}{4},-\frac{5}{4}$, which would yield perfect transfer in a time $2\pi$ by encoding across the first 5 spins. We find the solution displayed in Fig.\ \ref{fig:notchain}(c), which is readily verified to have the claimed eigenvalues, and a perfect transfer property.

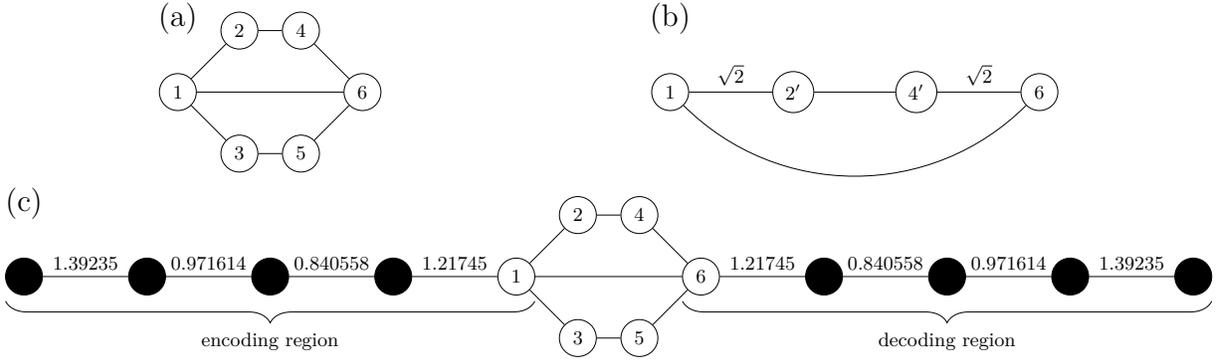
\begin{figure*}
\centering
\begin{adjustbox}{width=0.9\textwidth}
\begin{tikzpicture}[main/.style = {draw, circle,minimum width=0.6cm}] 
\begin{scope}
\node at (0,1.2) {\Large(a)};
\node[main] (1) at (0,0) {$1$}; 
\node[main] (2) at (1,1) {$2$}; 
\node[main] (3) at (1,-1) {$3$}; 
\node[main] (4) at (2,1) {$4$}; 
\node[main] (5) at (2,-1) {$5$}; 
\node[main] (6) at (3,0) {$6$}; 
\draw (1) -- (2) -- (4) -- (6) -- (5) -- (3) -- (1) -- (6);
\end{scope}
\begin{scope}[xshift=8cm]
\node at (0,1.2) {\Large(b)};
\node[main] (1) at (0,0) {$1$}; 
\node[main] (2) at (2,0) {$2'$}; 
\node[main] (3) at (4,0) {$4'$}; 
\node[main] (4) at (6,0) {$6$};  
\draw (1) -- node[midway, above] {$\sqrt{2}$}  (2) -- (3) -- node[midway, above] {$\sqrt{2}$} (4);
\draw (1) to [out=-45,in=-135,looseness=1] (4);
\end{scope}
\begin{scope}[yshift=-3cm,xshift=5.5cm]
\node at (-8,1.2) {\Large(c)};
\node[main,fill=black] (-4) at (-8,0) { };
\node[main,fill=black] (-3) at (-6,0) { };
\node[main,fill=black] (-2) at (-4,0) { };
\node[main,fill=black] (-1) at (-2,0) { };
\node[main] (1) at (0,0) {$1$}; 
\node[main] (2) at (1,1) {$2$}; 
\node[main] (3) at (1,-1) {$3$}; 
\node[main] (4) at (2,1) {$4$}; 
\node[main] (5) at (2,-1) {$5$}; 
\node[main] (6) at (3,0) {$6$}; 
\node[main,fill=black] (7) at (5,0) { };
\node[main,fill=black] (8) at (7,0) { };
\node[main,fill=black] (9) at (9,0) { };
\node[main,fill=black] (10) at (11,0) { };
\draw (1) -- (2) -- (4) -- (6) -- (5) -- (3) -- (1) -- (6);
\draw (-4) -- node[midway, above] {$1.39235$} (-3) -- node[midway, above] {$0.971614$} (-2) -- node[midway, above] {$0.840558$} (-1) -- node[midway, above] {$1.21745$} (1);
\draw (6) -- node[midway, above] {$1.21745$} (7) -- node[midway, above] {$0.840558$} (8) -- node[midway, above] {$0.971614$} (9) -- node[midway, above] {$1.39235$} (10);
\draw [decorate,decoration={brace,amplitude=10pt,mirror}]
(-8.3,-0.4) -- (0.3,-0.4) node [black,midway,below,yshift=-0.4cm] {encoding region};
\draw [decorate,decoration={brace,amplitude=10pt,mirror}]
(2.7,-0.4) -- (11.3,-0.4) node [black,midway,below,yshift=-0.4cm] {decoding region};
\end{scope}
\end{tikzpicture}
\end{adjustbox}
\caption{(a) Example network. Transfer between nodes 1 and 6 targetted. (b) After removal of two eigenvectors with no support on site 1, a fully supported network of 4 spins remains. (c) Extension with perfect encoded transfer.}\label{fig:notchain}
\end{figure*}

Let us now attempt to understand the importance of the requirement in Theorem \ref{thm:centralthesis} that $C_{\pm}$ should not share any eigenvalues. When this requirement is not fulfilled, it may still be possible to extend the system. For example, consider the case depicted in Fig.\ \ref{fig:transferthroughnull}\footnote{This example has been chosen to be small. It does not feature the symmetrisation procedure of $B\mapsto C$. If it did, there would be a parameter $J'$, and the only challenge is if $C_{\pm}$ share an eigenvalues for all values $J'$.}. We can subdivide this using the symmetry to give
$$
C_+=\left(\begin{array}{ccc}
0 & \sqrt{2} & 0 \\
\sqrt{2} & 0 & \sqrt{2} \\
0 & \sqrt{2} & 0 \end{array}\right),\qquad C_-=\left(\begin{array}{c} 0 \end{array}\right).
$$
Both components have a 0 eigenvalue. (There is also a component that we have neglected via Lemma \ref{lem:nosupport}.) We can still proceed to choose eigenvalues that guarantee real roots for $Q_A(z)$ and $P_A(z)$, and hope to get lucky over the interlacing. In this instance, we assign eigenvalues $\Lambda^+=\{\frac{60}{41},\frac{80}{41},\frac{100}{41}\}$ and $\Lambda^-=\{\frac{70}{41}\}$, and the solution is also given in Fig.\ \ref{fig:transferthroughnull}. There is perfect encoded transfer in time $41\pi/10$. The limiting factor in terms of transfer time is having to squeeze three eigenvalues in the range $z\in(\sqrt{2},2)$ (the only region $\mu_{B^+}(z)\mu_{B^-}(z)<0$ that allows us to pack as many eigenvalues as we need).

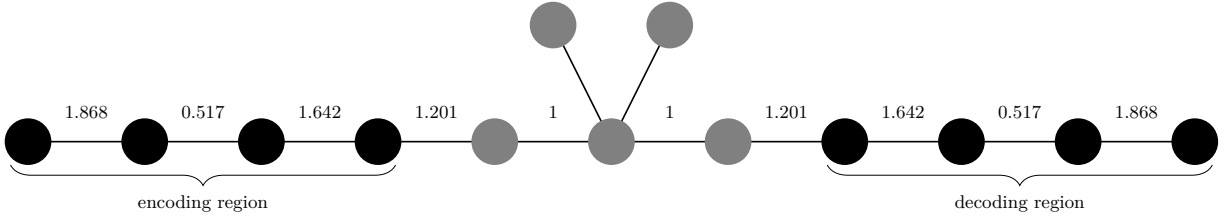
\begin{figure*}
\centering
\begin{adjustbox}{width=0.9\textwidth}
\begin{tikzpicture}
\node at (-9,0.5) {1.868};
\node at (-7,0.5) {0.517};
\node at (-5,0.5) {1.642};
\node at (-3,0.5) {1.201};
\node at (-1,0.5) {1};
\node at (9,0.5) {1.868};
\node at (7,0.5) {0.517};
\node at (5,0.5) {1.642};
\node at (3,0.5) {1.201};
\node at (1,0.5) {1};
\draw [thick] (-1,2) -- (0,0) -- (1,2);
\draw [thick] (-10,0) -- (10,0);
\node[circle,style={fill=gray,minimum width=0.8cm,text=white}] at (0,0) {};
\node[circle,style={fill=gray,minimum width=0.8cm,text=white}] at (2,0) {};
\node[circle,style={fill=gray,minimum width=0.8cm,text=white}] at (-2,0) {};
\node[circle,style={fill=black,minimum width=0.8cm,text=white}] at (4,0) {};
\node[circle,style={fill=black,minimum width=0.8cm,text=white}] at (-4,0) {};
\node[circle,style={fill=black,minimum width=0.8cm,text=white}] at (6,0) {};
\node[circle,style={fill=black,minimum width=0.8cm,text=white}] at (-6,0) {};
\node[circle,style={fill=black,minimum width=0.8cm,text=white}] at (8,0) {};
\node[circle,style={fill=black,minimum width=0.8cm,text=white}] at (-8,0) {};
\node[circle,style={fill=black,minimum width=0.8cm,text=white}] at (10,0) {};
\node[circle,style={fill=black,minimum width=0.8cm,text=white}] at (-10,0) {};
\node[circle,style={fill=gray,minimum width=0.8cm,text=white}] at (1,2) {};
\node[circle,style={fill=gray,minimum width=0.8cm,text=white}] at (-1,2) {};

\draw [decorate,decoration={brace,amplitude=10pt,mirror}]
(-10.3,-0.4) -- (-3.7,-0.4) node [black,midway,below,yshift=-0.4cm] {encoding region};
\draw [decorate,decoration={brace,amplitude=10pt,mirror}]
(3.7,-0.4) -- (10.3,-0.4) node [black,midway,below,yshift=-0.4cm] {decoding region};
\end{tikzpicture}
\end{adjustbox}
\caption{Uniformly coupled network (grey qubits) extended by black qubits to create perfect encoded transfer.}\label{fig:transferthroughnull}
\end{figure*}

In the previous example, the coincidence of eigenvalues between $C_{\pm}$ was not prohibitive. Nevertheless, it can be an important indication of the impossibility of transfer. Consider the following example, inspired by \cite{pemberton-ross2011},
\begin{center}
\begin{tikzpicture}
\draw (0,0) -- node[above,pos=0.5] {$1$} (1,1) --  node[above,pos=0.5] {$-1$} (2,0) -- node[above,pos=0.5] {$1$} (3,-1) -- node[above,pos=0.5] {$1$} (4,0) -- node[above,pos=0.5] {$1$} (3,1) -- node[above,pos=0.5] {$-1$} (2,0) -- node[above,pos=0.5] {$1$} (1,-1) -- node[above,pos=0.5] {$1$} (0,0);
\node[circle,style={fill=gray,minimum width=0.3cm,text=white}] at (0,0) {};
\node[circle,style={fill=black,minimum width=0.3cm,text=white}] at (2,0) {};
\node[circle,style={fill=gray,minimum width=0.3cm,text=white}] at (4,0) {};
\node[circle,style={fill=black,minimum width=0.3cm,text=white}] at (1,1) {};
\node[circle,style={fill=black,minimum width=0.3cm,text=white}] at (3,1) {};
\node[circle,style={fill=black,minimum width=0.3cm,text=white}] at (1,-1) {};
\node[circle,style={fill=black,minimum width=0.3cm,text=white}] at (3,-1) {};
\end{tikzpicture}
\end{center}
where we would hope to transfer between the two extremal (grey) qubits. In fact, this is impossible --- if an excitation is on the first site, it becomes a uniform superposition over its two neighbours. However, this is a `dark state' \cite{pemberton-ross2010,pemberton-ross2011} that cannot transfer any further to the right. We see this in the matrices $C^{\pm}$, which each decouple into two subspaces. They have one of those subspaces in common, and hence have common eigenvalues.
\begin{center}
\begin{tikzpicture}
\node at (-1,0) {$C^{\pm}=$};
\draw (0,0) -- node[above,pos=0.5] {1} (1,1) --  node[above,pos=0.5] {-1} (2,0) -- node[above,pos=0.5] {1} (1,-1) -- node[above,pos=0.5] {1} (0,0);
\node at (3,0) {$\pm 1\qquad\equiv$};
\node[circle,style={fill=black,minimum width=0.3cm,text=white}] at (0,0) {};
\node[circle,style={fill=black,minimum width=0.3cm,text=white}] at (2,0) {};
\node[circle,style={fill=black,minimum width=0.3cm,text=white}] at (1,1) {};
\node[circle,style={fill=black,minimum width=0.3cm,text=white}] at (1,-1) {};
\draw (4.5,0.5) --node[above,pos=0.5] {$\sqrt{2}$} (6,0.5);
\draw (4.5,-0.5) --node[above,pos=0.5] {$\sqrt{2}$} (6,-0.5);
\node[circle,style={fill=black,minimum width=0.3cm,text=white}] at (4.5,0.5) {};
\node[circle,style={fill=black,minimum width=0.3cm,text=white}] at (4.5,-0.5) {};
\node[circle,style={fill=black,minimum width=0.3cm,text=white}] at (6,0.5) {};
\node[circle,style={fill=black,minimum width=0.3cm,text=white}] at (6,-0.5) {};
\node at (6,0.8) {$\pm 1$};
\end{tikzpicture}
\end{center}

\section{Creating States}

Perfect transfer is just one task that we might want to achieve. Another option, as suggested in \cite{kay2022}, is the creation of an arbitrary single-excitation state within the fixed region that we were initially given. In principle, this is quite straightforward --- if we consider $C$ to be two distinct halves of the original $B$ and its mirror, and if the entire system were a perfect transfer system (rather than just a subset of the eigenvalues satisfying those conditions), then we could create any state we wanted on the mirror component, and after the perfect transfer time, it will appear on the main component, perfectly.

Depending on how many eigenvalues we set (and hence how far we extend the system at either end), we control a varying proportion of the eigenvalues, and hence control how close to that perfect condition we are. For instance, if we use the same extension as we would for perfect encoded state transfer, we control $2/3$ of the eigenvalues, and high fidelity transfer results. Of course, in systems where we have made the reduction to fully supported, none of those eliminated eigenvectors in the full system can ever be populated. This is not a concern for chains, where no such reduction is necessary.

Specifically, let $H$ be the full solution for perfect encoded transfer, and $P_\text{out}$ and $P_\text{in}$ are the projectors on the the original ($B$) system, and the component to the right of the $B$ system (i.e.\ the symmetrised $B$ and the extension on that side). If we want to create a state $\ket{\Psi}$, then the best input state to use is
$$
P_\text{in}e^{iHt_0}P_\text{out}\ket{\Psi}.
$$
As such, the worst performance for any target state can be assessed via the minimum singular value of $P_\text{in}e^{iHt_0}P_\text{out}$. See Fig.\ \ref{fig:schematic2}.

For uniformly coupled chains, we assessed this in \cite{kay2022}. State creation fidelity had the potential to be high, but there was generally a small space of low fidelity. It turns out that the construction of Fig.\ \ref{fig:uniform} via pair-pinning, while much slower, gives vastly enhanced fidelities. The specific case of Fig.\ \ref{fig:uniform} has a smallest singular value of $1-5\times 10^{-6}$. The results for random chains are comparable.

We view this as an artefact of the strategy for choosing eigenvalues: we would be able to perfectly create any state that is entirely supported on the eigenvectors of $\Gamma_P$. The error is thus due to any component of an eigenvector of $\Gamma_{\bar P}$ which cannot be replicated by a linear combination of eigenvectors is $\Gamma_P$. Our pair-pinning strategy takes each eigenvalue $\eta$ of $C_{\sigma}$ and assigns a flanking pair $\eta\pm\epsilon_{\pm}$. From Lemma \ref{lem:evector inverse}, the eigenvector restricted to $\mathcal{B}$, $\ket{b}$, is
$$
\ket{b}\propto (C_{\pm}-\lambda\identity)^{-1}\ket{1}.
$$
One can readily show that
$$
\ket{\eta}=\frac{\epsilon_-}{\epsilon_--\epsilon_+}\ket{b(\epsilon_+)}-\frac{\epsilon_+}{\epsilon_--\epsilon_+}\ket{b(-\epsilon_-)}+O(\epsilon^2).
$$
Since any target state can be written in terms of the spectral basis of $C_{\sigma}$, there is at worst an $O(\epsilon^2)$ error expressing it in terms of the controlled eigenvectors. Hence, at the cost of increasing time, we can make arbitrarily accurate copies of any target state.

While arbitrarily high fidelity state creation of a state $\ket{\eta}$ is always possible, perfect state creation is generically impossible. In the typical case, none of the $2N_B$ uncontrolled eigenvalues exactly satisfy the perfect transfer conditions. Assume that half of these are symmetric, and half antisymmetric. For any $\lambda\in\Gamma^{\sigma}_{\bar P}$, if perfect state creation is possible, it must be that $\bra{\eta}P_\text{out}\ket{\lambda}=0$. Hence, the only states that can be made perfectly are those in the null space of the square matrix
$$
M_{\sigma}=\sum_{k=1}^{|\Gamma^{\sigma}_{\bar P}|}\ket{k}\bra{\lambda_k}P_{\text{out}}
$$
Since we are working with fully supported systems, $\lambda_k$ is not an eigenvalue of $C_{\sigma}$. Thus, by Lemma \ref{lem:evector inverse}, $P_\text{out}\ket{\lambda_k}\propto(C_{\sigma}-\lambda_k\identity)^{-1}\ket{1}$. Our claim is that $M_{\sigma}$ does not have a null space, i.e.\ its determinant is non-trivial.
Let $U$ be the unitary that diagonalises $C_\sigma$:
$$
C_\sigma=UDU^\dagger, \qquad D=\sum_i\eta_i\proj{i}.
$$
Then
$$
|\text{det}(M)|=\left|\text{det}\sum_{i,j}\ket{i}\bra{j}\frac{\bra{i}U^\dagger\ket{1}}{\eta_i-\lambda_j}\right|.
$$
By the assumption of full support, $\bra{i}U^\dagger\ket{1}\neq0$. Each row has a constant, non-zero factor $\bra{i}U^\dagger\ket{1}$ which we can hence pull out. Thus, the determinant is only zero if
$$
\sum_{i,j}\frac{1}{\eta_i-\lambda_j}\ket{i}\bra{j}
$$
has determinant 0. However, this is just
$$
\left|\frac{\prod_{i\neq j}(\lambda_i-\lambda_j)\prod_{i\neq j}(\eta_i-\eta_j)}{\prod_{i,j}(\lambda_i-\eta_j)}\right|.
$$
Since all the eigenvalues are unique, it is never 0. $M$ is full rank, and no state that can be made perfectly.

\subsection{Time Control}

\begin{figure}
\centering
\includegraphics[width=0.4\textwidth]{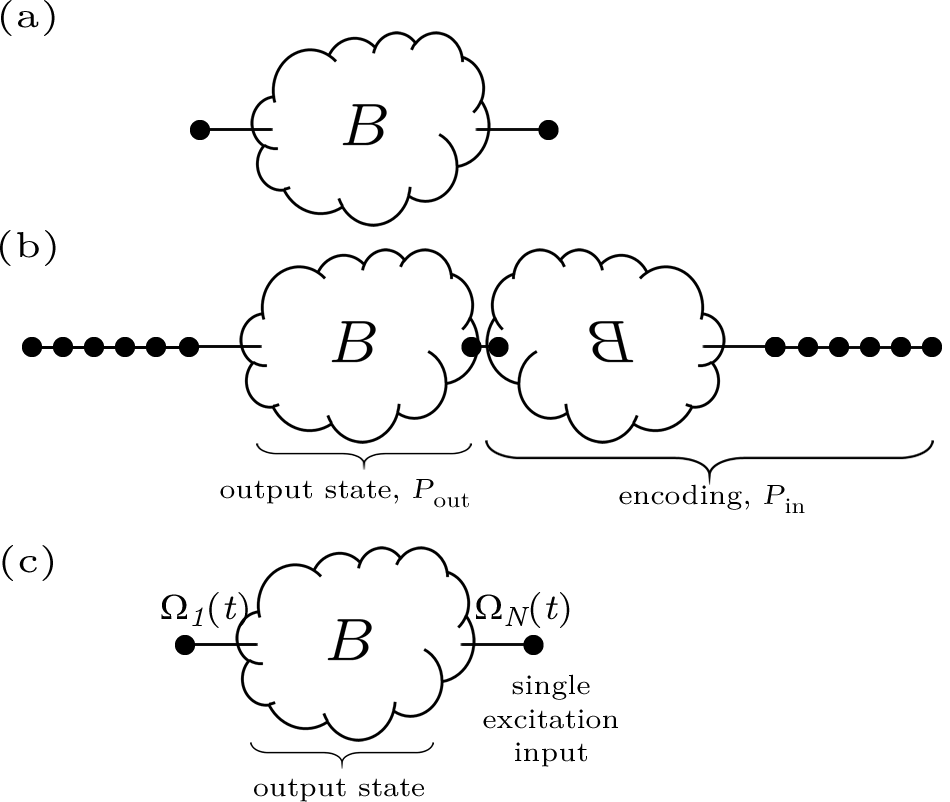}
\caption{(a) Given a fixed Hamiltonian, $B$, our aim is to create a state in the bulk. (b) We first symmetrise the system $B$ about the output vertex, extending the input vertex and its mirror with chains, and encoding/decoding across the added regions. (c) The state creation protocol can be replicated with time control of the extremal coupling strengths \cite{haselgrove2005}.}\label{fig:schematic2}
\end{figure}

High quality state creation might sound rather pointless: being able to create the right target state on the extension is surely no easier than creating the target state directly on the original spins? However, consider making all the extended spins `virtual' via the time-control technique of Haselgrove \cite{haselgrove2005}, as shown in Fig.\ \ref{fig:schematic2}. Now, whatever state you want to create, you just start with a single excitation localised on a single spin, and your time control determines what state is created.

\subsection{Multiple Excitation State Creation}

High quality state creation in the single excitation subspace also bodes well for multiple-excitation state creation (but note that this only applies to chains, and that the time-control technique does not apply directly). For example, if we wanted to create a GHZ state $\ket{GHZ}=(\ket{0}^{\otimes N_B}+\ket{1}^{\otimes N_B})/\sqrt{2}$ across all the sites $\mathcal{B}$, we can achieve a fidelity of
$$
F=\bra{GHZ}\rho\ket{GHZ}=\frac14\left(1+\prod_i\lambda_i\right)^2
$$
where $\lambda_i$ are the singular values of $P_\text{in}e^{iHt_0}P_\text{out}$ \cite{haselgrove2005}. The uniformly coupled example returns a fidelity of 0.9998. Random cases have similar levels of performance. The specific case of Fig.\ \ref{fig:nonuniform} has $F=1-6\times 10^{-6}$.

\section{Conclusions \& Future Work}

We have shown how a fixed quantum system of effective size $N$ can be symmetrically extended, first by adding a mirror of the original system, and then by adding chains at either end. With a total system size of $6N$, it is possible to achieve perfect transfer with encoding over the extensions. As demonstrated in \cite{kay2022}, one can reduce the sizes of extension, reducing the transfer time, and still achieving high quality transfer. Moreover, it is easier to fulfil the conditions on eigenvalue interlacing. While \cite{kay2022} focussed entirely on the case of extending a uniformly coupled chain, and gave no justification for the existence of solutions, the results of this paper show that almost any quantum network can support these results. The only requirement is that the two symmetry subspaces do not share any eigenvalue, as this detects the possibility that transfer could be impossible. Moreover, our pair-pinning strategy gives a constructive method that guarantees, in a certain limit, that solutions exist. It suggests that the state transfer time in this scenario is largely governed by the inverse of the spectral gap of the original system. The extensions can be replaced by individual, time-controlled couplings \cite{haselgrove2005}, bringing us very close to a realistic experimental scenario.

We have also proven that any single-excitation state can be created on the output region, with arbitrary accuracy. On chains, this includes arbitrary states comprising multiple excitations.

As a next step, it will be interesting to apply these results to systems such as the IBM Q devices using OpenPulse. However, the results are not immediately applicable because, while those systems are primarily defined by a fixed Hamiltonian with a coupling Hamiltonian which is exactly of the form we desire, the time control that we can access is only at the individual qubit level. It should be possible to simulate our scheme with the control available, but it requires further investigation.


%

\end{document}